\title{Tarski Lower Bounds from Multi-Dimensional Herringbones\footnote{This research was supported by US National Science Foundation CAREER grant CCF-2238372. Authors are in alphabetical order.}}
\author{Simina Br\^anzei\footnote{Purdue University. E-mail: simina.branzei@gmail.com.} \and Reed Phillips\footnote{Purdue University. E-mail: phill289@purdue.edu.} \and Nicholas Recker\footnote{Epic. E-mail: nrecker@umich.edu.}}
\begin{document}
\maketitle 

\thispagestyle{empty}

\begin{abstract}
 Tarski's theorem  states that every monotone function from a  complete lattice to itself has a fixed point. 
We analyze the query complexity of finding such a fixed point on the 
$k$-dimensional grid of side length $n$ under the 
$\leq$ relation. In this setting, there is an unknown monotone function $f: \{0,1,\ldots, n-1\}^k \to \{0,1,\ldots, n-1\}^k$ and  an algorithm must query a vertex $v$ to learn  $f(v)$. The goal is to find a fixed point of $f$ using as few oracle queries as possible.

We show that the randomized query complexity of this problem is  $\Omega\left( \frac{k \cdot \log^2{n}}{\log{k}} \right)$ for all $n,k \geq 2$. This  unifies and improves upon two prior results: a lower bound of $\Omega(\log^2{n})$  from \cite{etessami2019tarski} and a lower bound of $\Omega\left( \frac{k \cdot \log{n}}{\log{k}}\right)$ from  \cite{BPR24}, respectively.
\end{abstract}

\newpage
\pagenumbering{arabic}

\section{Introduction}

Let $\mathcal{L}$ be a complete lattice under the $\leq$ relation. A function $f : \mathcal{L} \to \mathcal{L}$ is monotone if $x \leq y$ implies $f(x) \leq f(y)$ for all $x, y \in \mathcal{L}$. Tarski's theorem \cite{tarski1955lattice}  states that every such function has a fixed point.
Tarski's theorem has applications in many areas. In game theory, pure Nash equilibria in supermodular games can be viewed as fixed points of monotone functions, and hence by Tarski's theorem such equilibria exist \cite{etessami2019tarski}. In denotational semantics, the semantics of recursively defined programs can be characterized as the least fixed points of monotone functions, ensuring a well-defined interpretation of recursion \cite{Tarski_pl_note}.

Recently, the computational complexity of finding Tarski fixed points has been investigated for grids. Specifically, for $k,n \in \mathbb{N}$, consider the $k$-dimensional grid $\{0, \ldots, n-1\}^k$ of side length $n$. Let $\leq$ be the binary relation  where for each pair of vertices ${a} = (a_1, \ldots, a_k) \in \{0, \ldots, n-1\}^k$ and ${b} = (b_1, \ldots, b_k) \in \{0, \ldots, n-1\}^k$, we have  ${a} \leq {b}$ if and only if $a_i \leq b_i$ for each $i \in [k]$. 

In the query model,  there is an unknown monotone function $f:\{0, \ldots, n-1\}^k \to \{0, \ldots, n-1\}^k$. The problem $TARSKI(n, k)$ is to find a fixed point of $f$ given query access to the function, where the answer to each query $v \in \{0, \ldots, n-1\}^k$ is $f(v)$. The randomized query complexity is the minimum  expected number of queries required to find a solution with a probability at least $9/10$, where the expectation is taken over the coin tosses of the algorithm. 

There are two main algorithmic approaches for finding a Tarski fixed point on grids. One approach is a path-following method that uses $O(nk)$ queries, which is best for small $n$ and large $k$. For large $n$ and small $k$, a divide-and-conquer approach works much better. Originally proposed by \cite{dang2011computational} and subsequently improved by \cite{fearnley2022faster} and \cite{chen2022improved}, it achieves an upper bound of $O\left((\log{n})^{\lceil \frac{k+1}{2} \rceil}\right)$ queries for any constant $k$. 

There are a few known lower bounds which are optimized for different parameter regimes. \cite{etessami2019tarski} proved a randomized query complexity lower bound of $\Omega\left(\log^2(n)\right)$ for $TARSKI(n, 2)$. The same lower bound applies to $TARSKI(n, k)$ for $k \geq 2$. This lower bound matches the known upper bound for $k=2$ and $k=3$, though it does not scale with $k$.
\cite{BPR24} proved two dimension-dependent randomized  lower bounds, of $\Omega(k)$ and $\Omega\left(\frac{k \log n}{\log k}\right)$, respectively, for $TARSKI(n, k)$. These bounds scale with $k$, so they are stronger than \cite{etessami2019tarski} when $k$ grows at about $\text{polylog}(n)$. In particular, they characterize the randomized query complexity of finding a Tarski fixed point on the Boolean hypercube $\{0,1\}^k$ (i.e. the power set lattice) as $\Theta(k)$. However, they are weaker than \cite{etessami2019tarski} for very small $k$. 

In this paper, we show that the randomized query complexity of $TARSKI(n, k)$ is $\Omega\left(\frac{k \log^2(n)}{\log k}\right)$ for all $n,k \geq 2$.
This is a dimension-dependent lower bound that unifies and improves upon the  lower bounds of $\Omega\left(\log^2(n)\right)$  from \cite{etessami2019tarski} and  of $\Omega\left(\frac{k \log n}{\log k}\right)$ from \cite{BPR24}.

\subsection{Our Contribution}

Our main result is the following theorem.

\begin{restatable}{mytheorem}{theoremmainlowerbound}
\label{thm:lb_k_log_squared_n}
There exists $c > 0$ such that for all $k,n \in \mathbb{N}$ with $k, n \geq 2$, the randomized query complexity of $TARSKI(n,k)$ is at least $c \cdot  \frac{k \log^2{n}}{\log{k}}$.
\end{restatable}


\vspace{1mm}
\noindent \textbf{Proof Overview.} 
Our first contribution is to design a  multi-dimensional family $\mathcal{F}$ of herringbone functions, which generalizes the 2D construction from \cite{etessami2019tarski}. 

At a high level, 
there is a path (``spine'') that runs from vertex $(0, \ldots, 0)$ to $(n-1, \ldots, n-1)$ that contains the unique fixed point. Along the spine, function values flow directly towards the fixed point. Outside the spine, the function is designed to be consistent with any position of the fixed point along the spine, which forces the algorithm to find ``many" spine vertices to succeed. The challenge we overcome is defining such a function in a way that explicitly uses all the dimensions without violating monotonicity.

To obtain a lower bound, we invoke Yao's lemma, which allows focusing on a deterministic algorithm $\mathcal{A}$ that receives inputs drawn from the uniform distribution $\mathcal{U}$ over functions from $\mathcal{F}$. The high level idea  is that  distribution $\mathcal{U}$ has the property that the location of each spine vertex is independent of spine vertices more than a short distance away. 
Consequently, a successful algorithm must repeatedly find spine vertices without relying on previously discovered ones, and finding each new spine vertex requires ``many'' queries.

Formalizing this intuition requires careful handling. We design  a monotonic measure of progress that starts with a low value, increases by a small amount in expectation with each query, and must arrive at a high value for $\mathcal{A}$ to succeed. To do so, we first divide the space around the spine into regions. For each region $R$, we define random variables:
\begin{itemize}
\item $ P_t(R) = \max_{v \in R} \log_2 \Pr \left[\text{$v$ is a spine vertex after $t$ queries from $\mathcal{A}$}\right] $ for all $t \in \mathbb{N}$. 
\item $    P^*_t(R) = \max_{0 \leq t^* \leq t} P_{t^*}(R)  $ for all $t \in \mathbb{N}$. 
\end{itemize}
For $m \in \Theta(\log n)$, let $G_m$ be the set of $m$-tuples of regions that are sufficiently far apart from each other that the location of the spine in each region is independent of the others in the tuple.
Define 
\begin{itemize} 
\item $ \overline{P}_t = \max_{(R^1, \ldots, R^m) \in G_m} \sum_{i=1}^m P^*_t(R^i) $ for all $t \in \mathbb{N}$. 
\end{itemize}
The proof  shows that $\overline{P}$ is a monotonic measure of progress  with the required properties, which implies a lower bound on the number of queries needed for the algorithm to succeed. 

\vspace{-2mm}
\paragraph{Roadmap to the paper.} The remainder of the paper is organized as follows. Related work is in Section~\ref{sec:related_work}. The model and preliminaries are in Section \ref{sec:model}. 

Our family $\mathcal{F}$ of multi-dimensional herringbone functions can be found in Section~\ref{sec:multi_dim_herringbones}, with proofs of its properties in Appendix~\ref{app:properties_herringbone_functions}. 
In Section~\ref{sec:hard_distribution} we define a  distribution $\mathcal{U}$ on the family  of functions  $\mathcal{F}$, with proofs of its properties in Appendix~\ref{app:hard_distribution}.  The proof of Theorem~\ref{thm:lb_k_log_squared_n} can be found in Section~\ref{sec:proof_main_theorem}, with proofs of supporting lemmas in Appendix~\ref{app:proof_main_theorem}. 

In Section~\ref{sec:upper_bound_distribution_U} we also give upper bounds on the query complexity of finding a fixed point of any multi-dimensional herringbone function. 

\subsection{Related Work} \label{sec:related_work}

Algorithms for the problem of finding Tarski fixed points on the $k$-dimensional grid of side length $n$ have only recently been considered. \cite{dang2020tarskialgorithm} gave an $O(\log^k (n))$ divide-and-conquer algorithm. \cite{fearnley2022faster} gave an $O(\log^2 (n))$ algorithm for the 3D grid and used it to construct an $O(\log^{2 \lceil k/3 \rceil} (n))$ algorithm for the $k$-dimensional grid of side length $n$. \cite{chen2022improved} extended their ideas to get an $O(\log^{\lceil (k+1)/2 \rceil}(n))$ algorithm.

 \cite{etessami2019tarski} showed a lower bound of $\Omega(\log^2(n))$ for the 2D grid, implying the same lower bound for the $k$-dimensional grid of side length $n$. This bound is tight for $k = 2$ and $k=3$, but there is an exponential gap for larger $k$. They also showed that the problem is in both PLS and PPAD, which by the results of \cite{fearnley2022cls} implies it is in CLS.

\cite{CLY23} give a black-box reduction from the Tarski problem to the same problem with an additional promise that the input function has a unique fixed point. This result  implies that the Tarski problem and the unique Tarski problem have the same query complexity.

Two problems related conceptually to that of finding a Tarski fixed point are finding a Brouwer fixed point~\cite{hirsch1989exponential,chen2005algorithms,chen2007paths} and finding a local minimum (i.e. local search)~\cite{aldous1983minimization,Aaronson06,zhang2009tight,llewellyn1989local,sun2009quantum,santha2004quantum,dinh2010quantum}. The query complexity lower bounds for Brouwer and local search also rely on hidden path (``spine'') constructions. However, the monotonicity condition of the function in the Tarski setting poses an extra challenge.

\section{Model} \label{sec:model}

Let $\{0, 1, \ldots, n-1\}^k$ be the $k$-dimensional grid of side length $n$ under the $\leq$ relation.  Given oracle access to an (unknown) monotone function $f: \{0,1,\ldots, n-1\}^k \to \{0,1,\ldots, n-1\}^k$, the Tarski search problem,  ${TARSKI}(n,k)$, is to find a fixed point of $f$ using as few oracle queries as possible.

\begin{definition} [${TARSKI}(n,k)$]
Let $k,n \in \mathbb{N}$.
     Given oracle access to an unknown monotone function $f : \{0, 1, \ldots, n-1\}^k \to \{0, 1, \ldots, n-1\}^k$, find a vertex $x \in \{0, 1, \ldots, n-1\}^k$ with $f(x) = x$ using as few queries as possible.
 \end{definition}

\paragraph{Query complexity.} The deterministic query complexity of a task is the total number of queries necessary and sufficient for a correct deterministic algorithm to find a solution. The randomized query complexity is the expected number of queries required to find a solution with probability at least 9/10 for each input \footnote{Any other constant greater than $1/2$ would suffice.}, where the expectation is taken over the coin tosses of the algorithm.

Given an algorithm $\mathcal{A}$ for $TARSKI(n,k)$ that receives an input drawn from some input  distribution $\mathcal{D}$, we say that an execution of $\mathcal{A}$ \emph{succeeds} if it outputs a fixed point of the function given as input and \emph{fails} otherwise.

\medskip 

\noindent \textbf{Notation.} The following notation is used throughout the paper. 
\begin{itemize}
    \item For a vector $\vec{x}$ in $\mathbb{R}^k$ and $i \in [k]$, we write the $i$-th coordinate of $\vec{x}$ as $x_i$.
    \item For a vector $\vec{x} \in \mathbb{R}^k$, we use $wt(\vec{x})$ to denote its Hamming weight $\sum_{i=1}^k x_i$.
    \item For each $j \in [k]$, let
        $\vec{e}_j = (0, \ldots, 0, 1, 0, \ldots, 0)$,
    where the vector $\vec{e}_j$ is all zeroes except the $j$-th coordinate in which it takes value $1$.
    \item For all $u, v \in \mathbb{R}^k$,  the line segment between them is 
        $\ell(u, v) = \bigl\{ \lambda u + (1-\lambda)v \mid \lambda \in [0, 1] \bigr\} \,. $ 
\end{itemize}


\section{Multi-dimensional Herringbone Functions} \label{sec:multi_dim_herringbones}

Our first contribution is to generalize the lower bound  construction from \cite{etessami2019tarski} to any number of dimensions. Towards this end, we need to state the definition of a spine, which is 
a monotone sequence of vertices that each have Hamming distance $1$ from their neighbors.


\begin{definition}[Spine] \label{def:spine}
A spine $\mathbf{s} = \{s^0, \ldots, s^{(n-1)k}\}$ is a sequence of vertices with the property that $s^0 = (0, \ldots, 0)$, $s^{(n-1)k} = (n-1, \ldots, n-1)$, and $s^{i+1} > s^i$ for all $i \in \{0, \ldots, (n-1)k-1\}$.
\end{definition}

\begin{lemma} \label{lem:weight_of_spine_vertex}
    Let $\vec{s} = (s^0, \ldots, s^{(n-1)k})$ be a spine. Then $wt(s^i) = i$ for each $ i \in \{0, \ldots, (n-1)k\}$.
\end{lemma}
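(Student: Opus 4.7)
My plan is to prove this via a telescoping argument that exploits the boundary conditions at the two ends of the spine. The key observation is that for integer-valued vectors $a, b \in \{0, \ldots, n-1\}^k$, the relation $a < b$ (i.e.\ $a \leq b$ coordinatewise and $a \neq b$) forces $wt(b) \geq wt(a) + 1$, because at least one coordinate must strictly increase and coordinates are integers, so that strict increase contributes at least $1$ to the Hamming weight while all other coordinates contribute a nonnegative amount.

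Applying this observation to each consecutive pair in the spine, for every $i \in \{0, \ldots, (n-1)k - 1\}$ we obtain $wt(s^{i+1}) \geq wt(s^i) + 1$. Summing these inequalities telescopically yields
\[
wt(s^{(n-1)k}) \;\geq\; wt(s^0) + (n-1)k.
\]
Plugging in the boundary values from Definition~\ref{def:spine}, namely $wt(s^0) = wt((0,\ldots,0)) = 0$ and $wt(s^{(n-1)k}) = wt((n-1,\ldots,n-1)) = (n-1)k$, we see that the inequality is actually an equality overall. Since the sum of $(n-1)k$ nonnegative integer quantities, each at least $1$, equals exactly $(n-1)k$, every one of them must equal $1$; that is, $wt(s^{i+1}) - wt(s^i) = 1$ for each $i$. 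A trivial induction starting from $wt(s^0) = 0$ then gives $wt(s^i) = i$ for all $i \in \{0, \ldots, (n-1)k\}$, completing the proof.

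I do not anticipate any real obstacle here: the argument is essentially a pigeonhole on weight increments forced by the fact that the chain from $(0,\ldots,0)$ to $(n-1,\ldots,n-1)$ has exactly $(n-1)k$ steps in a poset of height $(n-1)k$. The only point worth stating carefully is the elementary lemma that $a < b$ on the integer grid implies $wt(b) \geq wt(a) + 1$; once that is in hand, the rest is telescoping and matching endpoints. As a side remark, this lemma also justifies the informal claim in the text that consecutive spine vertices have Hamming distance exactly $1$, since equality $wt(s^{i+1}) = wt(s^i) + 1$ together with $s^{i+1} > s^i$ forces exactly one coordinate to increase by exactly $1$.
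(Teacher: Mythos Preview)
Your proof is correct and follows essentially the same approach as the paper: both use that $s^{i+1} > s^i$ on the integer grid forces $wt(s^{i+1}) \geq wt(s^i) + 1$, and then a counting/pigeonhole argument against the endpoint weights $0$ and $(n-1)k$ forces every increment to be exactly $1$. The paper's version is simply terser, observing directly that a strictly increasing sequence of $(n-1)k+1$ integer weights lying in $\{0,\ldots,(n-1)k\}$ must hit each value exactly once.
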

\begin{proof}
    By the definition of a spine, we have $s^{i+1} > s^i$ for all $i \in \{0, \ldots, (n-1)k-1\}$. Thus $wt(s^{i+1}) > wt(s^i)$. The available Hamming weights are $\{0, \ldots, k(n-1)\}$, thus $wt(s^i) = i$ for all $i \in \{0, \ldots, (n-1)k\}$.
\end{proof}

\begin{definition}[Multi-dimensional Herringbone Function]
    \label{def:general-herringbone}
    Consider a spine $\mathbf{s} = \{s^0, \ldots, s^{(n-1)k}\}$ and an index  $j \in \{0, \ldots, (n-1)k\}$.
Define functions $M, \mu : \{0, \ldots, n-1\}^k \to \{0, \ldots, (n-1)k\}$ as: 
    \begin{align}
        M(v) = \min \left\{i \in \{0, \ldots, (n-1)k\} \mid s^i \geq v \right\} \; \; \;   \mbox{and} \; \; \;    
        \mu(v) = \max \left\{ i  \in \{0, \ldots, (n-1)k\} \mid  s^i \leq v \right\}\,.  \notag 
    \end{align}
    Let $h^{\mathbf{s}, j} : \{0, \ldots, n-1\}^k \to \{0, \ldots, n-1\}^k$  be the function:
    \begin{equation}
        h^{\mathbf{s}, j}(v) = \begin{cases}
            v & \text{if } v = s^j \\
            s^{i+1} & \text{if } v = s^i \text{ for some } i < j \\
            s^{i-1} & \text{if } v = s^i \text{ for some } i > j \\
            v + (s^{\mu(v)+1} - s^{\mu(v)}) - (s^{M(v)} - s^{M(v)-1}) & \text{otherwise} \,.
        \end{cases}
    \end{equation}
\end{definition}

The first three cases of \cref{def:general-herringbone} are the same as in the two-dimensional construction of \cite{etessami2019tarski}, as they naturally extend to the higher-dimensional setting without modification. 

Our key innovation lies in the fourth case, which applies to vertices not on the spine. It is designed to hide the location of $s^j$ by being consistent with any value of $j$.

We show that each function $h^{\mathbf{s}, j} $ from  \cref{def:general-herringbone} has a unique fixed point at $s^j$ and is monotone. The proofs are included in Appendix~\ref{app:properties_herringbone_functions}.
\begin{restatable}{mylemma}{lemmauniquefixedpoint}
\label{lem:h-unique-fixedpoint}
    The function $h^{\mathbf{s}, j}$ in Definition \ref{def:general-herringbone} has a unique fixed point at $s^j$.
\end{restatable}

\begin{restatable}{mylemma}{lemmamonotone}
\label{lem:h-monotone}
    The function $h^{\mathbf{s}, j}$ in Definition \ref{def:general-herringbone} is monotone.
\end{restatable}

On the spine, $h^{\mathbf{s}, j}$ follows the spine to $s^j$. Off the spine, $h^{\mathbf{s}, j}(v)$ increases in the dimension that the spine moves after $s^{\mu(v)}$ and decreases in the dimension that the spine moves before $s^{M(v)}$. 

In two dimensions, this coincides with the herringbone function of \cite{etessami2019tarski}: vertices above the spine go down-right, and vertices below the spine go up-left. In higher dimensions, however, the off-spine vertices do not in general point directly at the spine.

\section{Hard Distribution} \label{sec:hard_distribution}

In this section we define a hard distribution of inputs that is used to prove Theorem~\ref{thm:lb_k_log_squared_n}. We will use multi-dimensional herringbone functions with spines of a particular form. The proofs of statements in this section can be found in Appendix~\ref{app:hard_distribution}.

The idea is to restrict the spine to a ``tube" running from the minimum to the maximum vertex in the lattice with width $\text{poly}(n)$. We select connecting points on several slices throughout the tube and make the spine interpolate between these points.

We first introduce the tube and some of its useful properties.

\begin{definition}[Tube]
    The tube of width $L$ in $\{0, \ldots, n-1\}^k$ is the set of points:
    \begin{equation}
        T_L = \{ x \in \{0, \ldots, n-1\}^k \mid  \exists i \in \{0, \ldots, n-1\} \text{ such that $|x_j - i| \leq L$} \;  \forall j \in [k]\}\,. \label{eq:condition_to_be_in_tube}
    \end{equation}
\end{definition}

The main appeal of the tube is that queries inside the tube can only provide information about nearby spine vertices.

\begin{restatable}{mylemma}{lemmatubequerylocality}
\label{lem:tube-query-locality}
     For a point $v \in T_L$ and a multi-dimensional herringbone function $f$ whose spine lies in $T_L$, the value of $f(v)$ depends only on spine vertices whose coordinates are all within $2L$ of $v$'s.
\end{restatable}

However, an algorithm is free to query any vertices, not just ones in the tube. We get around this by reducing a query to a vertex $v$ outside the tube to two queries on the boundary of the tube, which together provide the same information as querying $v$.

\begin{restatable}{mylemma}{lemmaoutsidequerysimulation}
\label{lem:outside-query-simulation}
 Let $L \in \mathbb{N}$ and  $f: \{0, \ldots, n-1\}^k \to \{0, \ldots, n-1\}^k$  be a  multi-dimensional herringbone function whose spine lies entirely in $T_L$. 
    For each  point $v \not \in T_L$, there exist  $a, b \in T_L$ such that:
    \begin{itemize}
        \item $a$ and $b$ can be computed from $v$ and $L$; and
        \item $f(v)$ can be computed from $f(a)$ and $f(b)$.
    \end{itemize}
\end{restatable}


We now divide the tube $T_L$ into regions, whose  sizes are defined using a parameter $\rho  \geq 0$ such that $\rho \mid k(n-1)$, $ k \mid \rho$, and $\rho \geq 12kL$. We will eventually use $L = \frac{1}{12} \sqrt{n-1}$ and $\rho = k \sqrt{n-1}$, but these precise values are not important until the final bound is evaluated. We do require $L$ and $\rho$ to be integers, which will only occur if $n-1$ is the square of an integer that is divisible by $12$. However, such values of $n$ are sufficiently common that rounding down to the nearest one is a negligible loss.

We define a set of indices $a$ for the regions we focus on as follows:
\begin{align}
    \mathcal{I} =  \Bigl\{a \mid a \in \{0, \ldots, k(n-1)\} \; \mbox{and} \; \rho \mid a \Bigr\}\,.
\end{align}

\begin{definition}[Regions]
For each $a \in \mathcal{I} \setminus \{k(n-1)\}$, define the region 
\begin{align}
    R_a = \{ \vec{x} \in T_L \mid  a \leq wt(\vec{x}) \leq a + \rho \}\,. \label{eq:region_R_a}
\end{align}
For all $a \in \mathcal{I}$, let 
    $Low_a = \{ \vec{x} \in T_L \mid wt(\vec{x}) = a\}\,.$
For each region $R_a$, $Low_a$ is its lower boundary.
\end{definition}

The spines of our hard input distributions will be defined by passing through a series of connecting points, one at each region boundary.
\begin{definition}[Connecting points]
Given $i \in \mathcal{I}$, the  vertices $\chi_i$ are called \emph{connecting points} if $\chi_i \in Low_i$.
\end{definition}
Given a sequence of connecting points $\chi_i$, we next construct a spine that interpolates between them. Each $\chi_i$ represents the vertex that the spine passes through when entering region $R_{i}$ from $R_{i-\rho}$. 

For each vertex $v \in \{0, \ldots, n-1\}^k$, let $\mathcal{C}(v)$ be the axis-aligned cube of side length $1$ centered at $v$, including its boundary faces. 
    For each  $i \in [k]$, the face of $\mathcal{C}(v)$ with minimum value in coordinate $i$ is called \emph{backward} and the face  with maximum value in coordinate $i$ is called \emph{forward}.

\begin{restatable}{mylemma}{lemhelpdefinitionspine}
\label{lem:help_definition_spine}
Let $a,b \in \{0,\ldots, n-1\}^k$ with $a \leq b$. 
For an arbitrary vertex $v \in \{0, \ldots, n-1\}^k$, suppose the cube $\mathcal{C}(v)$ intersects the line segment $\ell(a, b)$. Let $z \in \mathbb{R}^k$ be such that   $z = \max \{\mathcal{C}(v) \cap \ell(a, b)\}$. Then $z$ is well-defined, and either $v=b$ or $z$ lies on some forward face of $\mathcal{C}(v)$.
\end{restatable}

Using Lemma \ref{lem:help_definition_spine} as a subroutine, we can construct a path that follows the continuous line segment $\ell(u, v)$ between two given vertices $u$ and $v$ as closely as possible. In particular, each vertex $s^i$ on the path will include some of the points in  $\ell(u, v)$ in its cube $\mathcal{C}(s^i)$.

\begin{restatable}{mylemma}{lemsequencesjbetweenchis}
\label{def:sequence_s_j_between_chi_i_and_chi_i_plus_rho}
Let   $u,v \in \{0, \ldots, n-1\}^k$ be vertices with $u \leq v$. Then there exists a sequence of vertices $s^1, \ldots, s^m$ for some $m \in \mathbb{N}$ such that  
\begin{enumerate}
    \item $s^1 = u$ and $s^m = v$.
    \item $(s^1, \ldots, s^m)$ is a monotone connected path in the graph $\{0, \ldots, n-1\}^k$.
    \item For each $i \in [m]$, the set $\mathcal{C}(s^i) \cap \ell(u, v)$ is nonempty.
\end{enumerate}
\end{restatable}
An illustration of a sequence given by Lemma~\ref{def:sequence_s_j_between_chi_i_and_chi_i_plus_rho} can be found in Figure~\ref{fig:line-through-grid_main}.

\begin{figure}[h!]
\centering 
\includegraphics[scale=0.36]{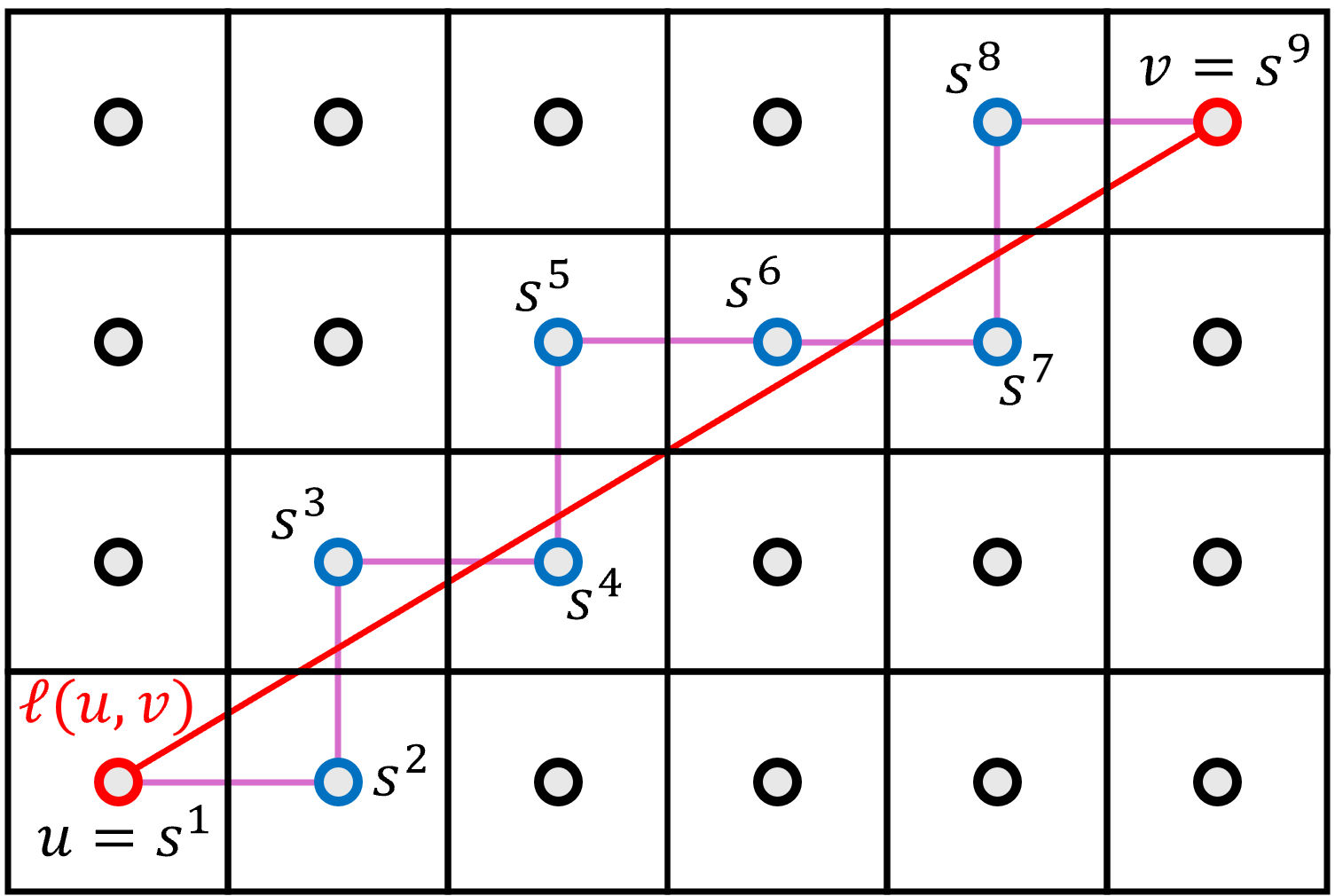}
\caption{A 2-dimensional example of a sequence given by Lemma \ref{def:sequence_s_j_between_chi_i_and_chi_i_plus_rho}, with endpoints $u$ and $v$ and the associated spine vertices $s^1$ through $s^9$. Each vertex is drawn surrounded by its cube.}
\label{fig:line-through-grid_main}
\end{figure}

A sequence of vertices $s = (s^1, \dots, s^q)$ is \emph{lexicographically smallest} among a set $\mathcal{W}$ of sequences if $s$ has the smallest (lexicographically) possible first vertex $s^1$ among all choices for $s^1$ in $\mathcal{W}$, and among sequences tied in the first vertex, the lexicographically smallest possible second vertex $s^2$, and so forth.

\begin{definition}[Spine induced by connecting points] \label{def:connecting-point-spine}
Let $\chi_i$ for $i \in \mathcal{I}$ be connecting points. 
For each consecutive pair of connecting points $\chi_i$ and $\chi_{i+ \rho}$, define the sequence  $(s^i, s^{i+1}, \ldots, s^{i+\rho})$ as the lexicographically smallest one that satisfies the properties of Lemma~\ref{def:sequence_s_j_between_chi_i_and_chi_i_plus_rho} invoked with $u = \chi_i$ and $v = \chi_{i+\rho}$. Let the spine be  $\vec{s} = (s^0, \ldots, s^{(n-1)k})$.
\end{definition}

\begin{figure}[h!]
\centering 
\includegraphics[scale=0.4]{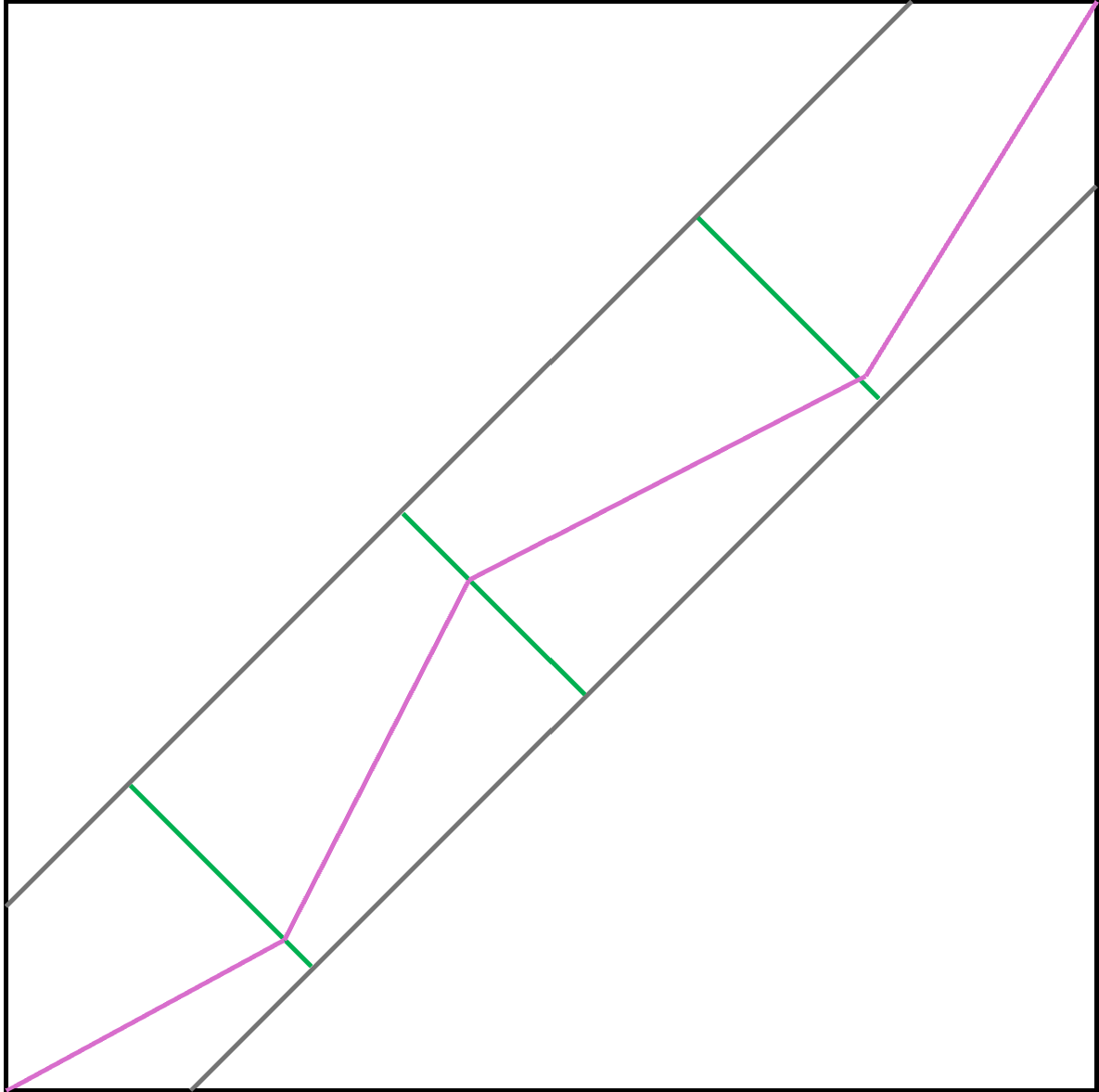}
\caption{A sketch of the structure of the spines in Definition \ref{def:connecting-point-spine}. The gray lines represent the edges of the tube $T_L$. The green lines represent $Low_a$ for various $a \in \mathcal{I}$, and divide $T_L$ into several regions. The pink path represents a possible spine through a choice of connecting points. }
\end{figure}

\begin{observation} \label{lem:spine_monotone}
Each spine $\vec{s}$ from Definition~\ref{def:connecting-point-spine} is monotone. 
\end{observation}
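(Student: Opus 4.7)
The plan is to reduce the claim to what Lemma~\ref{def:sequence_s_j_between_chi_i_and_chi_i_plus_rho} already gives, and then glue consecutive segments together at their shared connecting points. For every $i \in \mathcal{I}$ with $i < k(n-1)$, the subsequence $(s^i, s^{i+1}, \ldots, s^{i+\rho})$ is obtained by invoking the lemma with $u = \chi_i$ and $v = \chi_{i+\rho}$, so by property 2 of the lemma it is a monotone connected path in the graph $\{0,\ldots,n-1\}^k$. This directly yields $s^{j+1} > s^j$ for every $j \in \{i, \ldots, i+\rho-1\}$, matching the strict-increase requirement in Definition~\ref{def:spine}.

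The next step is to verify that these segments concatenate consistently into a single sequence $\vec{s} = (s^0, \ldots, s^{(n-1)k})$. The last vertex of the segment associated with $[i, i+\rho]$ is $\chi_{i+\rho}$, which is also the first vertex of the segment associated with $[i+\rho, i+2\rho]$; so $\vec{s}$ is well-defined, and the monotonicity inequality at the seam $j = i+\rho$ is already contained in the monotonicity of the neighboring segments. The boundary conditions $s^0 = (0,\ldots,0)$ and $s^{(n-1)k} = (n-1,\ldots,n-1)$ are forced because $Low_0$ and $Low_{(n-1)k}$ each contain exactly one vertex, so $\chi_0$ and $\chi_{(n-1)k}$ have no other option.

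The only delicate bookkeeping issue, and the one I expect to require the most care, is verifying that each segment produced by the lemma has exactly $\rho + 1$ vertices, so that the global indexing $s^0, s^1, \ldots, s^{(n-1)k}$ actually matches what Definition~\ref{def:connecting-point-spine} asserts. This follows because any monotone path in $\{0,\ldots,n-1\}^k$ must increase its Hamming weight by exactly one at each step, so a path from a weight-$i$ vertex to a weight-$(i+\rho)$ vertex has length $\rho + 1$; Lemma~\ref{lem:weight_of_spine_vertex} records precisely this property and, as a bonus, certifies the intermediate Hamming weights $wt(s^j) = j$. Once the indexing is pinned down, the piecewise-monotone argument above immediately produces a valid spine in the sense of Definition~\ref{def:spine}.
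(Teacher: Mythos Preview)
Your proposal is correct and follows the same approach as the paper: both argue that each segment produced by Lemma~\ref{def:sequence_s_j_between_chi_i_and_chi_i_plus_rho} is monotone and that concatenating these segments at the shared connecting points preserves monotonicity. Your write-up is more careful than the paper's two-sentence proof, explicitly verifying the seam consistency, the boundary values $s^0$ and $s^{(n-1)k}$, and the segment lengths, but the underlying idea is identical.
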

\begin{proof}
Given a set of connecting points $\chi_i$ for $i \in \mathcal{I}$, 
the spine from Definition~\ref{def:connecting-point-spine} is obtained  by generating a monotone path $P_i$ to connect each consecutive pair of connecting points $\chi_i$ and $\chi_{i+ \rho}$. Then the spine is the union of the $P_i$'s and thus monotone as well.
\end{proof}

\begin{definition}[Set of spines $\Psi$, family of functions $\mathcal{F}$, and distribution $\mathcal{U}$]
The set of spines we consider is
\begin{align}
    \Psi &= \Bigl\{\vec{s} \mid \exists \mbox{ connecting points } \chi_i \mbox{ for } i \in \mathcal{I} \mbox{ such that } \vec{s} \mbox{ is the spine induced by them}\Bigr\}
\end{align}
The family of functions we consider is 
    $ \mathcal{F}  = \bigl\{ h^{\vec{s}, j} \mid \vec{s} \in \Psi, j \in \{0, \ldots, k(n-1)\}\bigr\}$. 
Let   $\mathcal{U}$ be the uniform distribution over functions in $\mathcal{F}$.
\end{definition}

Choosing spine vertices independently in different regions allows us to show that queries within the tube truly give only local information, extending Lemma \ref{lem:tube-query-locality}.

\begin{restatable}{mylemma}{lemqueryaffectsadjacentonly}
\label{lem:query-only-affects-adjacent-regions}
    Suppose a vertex $v \in \{0, \ldots, n-1\}^k$ is in a region $R_{\alpha}$. On input $f \sim \mathcal{U}$, the value of $f(v)$ is independent of the location of spine vertices in all regions except possibly regions 
    $R_{\alpha + i\rho}$ for  $i \in \{-2, \ldots, 2\}$.
\end{restatable}


\section{Proof of the Main Theorem} \label{sec:proof_main_theorem}

We now move towards the proof of Theorem \ref{thm:lb_k_log_squared_n}.
At a high level, the lower bound in  this theorem comes from a combination of two ideas:
\begin{itemize}
    \item Lemma \ref{lem:many-far-apart-spine-vertices-hard}, which states that finding spine vertices in many different regions is difficult. Its proof is based on spine vertices being exponentially rare due to our construction.
    \item Lemma \ref{lem:many-region-queries-required}, which states that any correct algorithm for $TARSKI(n, k)$ must nevertheless find spine vertices in many different regions. Its proof is based on a reduction to ordered search.
\end{itemize}

The proofs of statements in this section can be found in Appendix~\ref{app:proof_main_theorem}.

We begin with the ideas leading to Lemma \ref{lem:many-far-apart-spine-vertices-hard}.  The general idea is that any given vertex is very unlikely to be a spine vertex (Lemmas \ref{lem:size_Low_a} and \ref{lem:random-spine-unlikely-to-hit-vertex}), while each query only provides $O(\log k)$ bits of information about the spine (Lemma \ref{lem:F-v-size-less-than-k-cubed}).

We show that the number of points on each boundary between regions is exponentially large.

\begin{restatable}{mylemma}{lemmasizelowa}
\label{lem:size_Low_a}
    Let $a \in \mathcal{I} \setminus \{0, k(n-1)\}$. Then 
$|Low_a| \geq (2L+1)^{\lfloor k/2 \rfloor}\,.$
\end{restatable}

Because the number of points on each boundary is large, randomly selecting connecting points on the two ends of region doesn't concentrate the spine anywhere in that region.

\begin{restatable}{mylemma}{lemmarandomspineunlikelyhitvertex}
\label{lem:random-spine-unlikely-to-hit-vertex}
Let $w \in T_L$ be a vertex in the tube and 
    $a,b \in \mathcal{I} \setminus \{0, k(n-1)\}$ such that 
     $b - a \geq 12kL$.  
    For random vertices $U \sim \mathcal{U}(Low_a)$ and $V \sim \mathcal{U}(Low_b)$, we have
\begin{equation} \label{eq:w_hits_sequence_s}
        \Pr \Bigl[ w \in \vec{s}(U, V) \Bigr] \leq \frac{17^k}{(2L+1)^{\lfloor k/2 \rfloor}},
    \end{equation} 
    where $\vec{s}(U, V)$ is the lexicographically smallest connected monotone path from $U$ to $V$ that satisfies the properties of  Lemma \ref{def:sequence_s_j_between_chi_i_and_chi_i_plus_rho}.
\end{restatable}

The next lemma shows that for each vertex $v$, the number of possible responses to querying $v$ is $poly(k)$. This follows from Definition \ref{def:general-herringbone} and shows that   each query only provides $O(\log k)$ bits of information.

\begin{restatable}{mylemma}{lemmaFvsizelessthankcubed}
\label{lem:F-v-size-less-than-k-cubed}
    For each vertex $v \in \{0, \ldots, n-1\}^k$, let
    \begin{align}
        F(v) &= \left\{y \in \{0, \ldots, n-1\}^k \mid \exists \vec{s} \in \Psi, j \in \{0, \ldots, k(n-1)\} \mbox{ such that } h^{\vec{s}, j}(v) = y\right\} \,. 
    \end{align}
    Then, for $k \geq 2$, we have 
$        |F(v)| \leq k^3\,.$
\end{restatable}

Our central argument will revolve around finding spine vertices in ``far-apart" regions. To that end, we define a distance between regions.

\begin{definition}
    Given two regions $R_{\alpha}$ and $R_{\beta}$, their distance is defined as:
    $        dist(R_{\alpha}, R_{\beta}) = |\alpha-\beta|/\rho \,.$
\end{definition}

Next we define what it means for an algorithm to ``survey the spine''. 

\begin{definition}[Surveying the spine]
On an input function $f \in \mathcal{F}$,  an 
algorithm ${\mathcal{A}}$ is said to \emph{$m$-survey the spine of $f$} if the following condition is met:
\begin{itemize}
    \item the set of vertices queried by ${\mathcal{A}}$ contains  $m$ spine vertices $v_1, \ldots, v_m$ such that $dist(R^i, R^j) \geq 5$ for all $i \neq j$, where $R^{a}$ is the region containing $v_{a}$.
\end{itemize}
\end{definition}

Now we can bound the success probability of algorithms that manage to survey the spine within a ``few'' queries.

\begin{restatable}{mylemma}{lemmamanyfarapartspineverticeshard}
\label{lem:many-far-apart-spine-vertices-hard}
Let $m \in \mathbb{N}$. Let $\mathcal{A}$ be a deterministic algorithm that has oracle access to an unknown function  $f \in \mathcal{F}$, where $\mathcal{A}$ is said to \emph{succeed} on input $f$ if it $m$-surveys the spine of $f$. Then if $\mathcal{A}$ makes at most $T$ queries, its success probability on inputs drawn from $\mathcal{U}$ is upper bounded by:
\begin{align}
    \Pr_{f \sim \mathcal{U}} [\text{$\mathcal{A}$ succeeds in $T$ queries on input $f$}] &\leq \frac{15 T \log_2(k)}{(m-2) \cdot \left(\lfloor k/2 \rfloor \log_2(2L+1) - k \log_2{17}\right)} \,.
\end{align}
\end{restatable}
\begin{proof}[Proof sketch]
We provide a proof sketch here; the complete proof can be found in Appendix~\ref{app:proof_main_theorem}.

The main idea is to find a measure of progress and show  that the algorithm cannot progress too quickly. Let $T$ be the number of queries issued by algorithm $\mathcal{A}$. For each time $t \in \{0, \ldots, T\}$, we define a random variable $\overline{P}_t$ such that: 
\begin{itemize} 
\item $\overline{P}_0$ is ``small''
\item the expectation of $\overline{P}_{t+1} - \overline{P}_t$ is  bounded from above, and
\item the final value $\overline{P}_T$ must be ``large'' 
for $\mathcal{A}$ to have a high success probability. 
\end{itemize} 
These properties collectively show a lower bound on the number of time steps 
$T$ required for  algorithm $\mathcal{A}$ to succeed with high probability.

To make this intuition precise, we define for each region $R$ the following random variables:
\begin{itemize}
   \item  $P_t(R) = \max_{v \in R} \log_2 \Pr \left[\text{$v$ is a spine vertex after $t$ queries from $\mathcal{A}$}\right] \,.  $
   \item $ P^*_t(R) = \max_{0 \leq t^* \leq t} P_{t^*}(R) \,. $ 
\end{itemize}
Let 
$     G_m = \left\{(R^1, \ldots, R^m) \mid R^1, \ldots, R^m \mbox{ are regions, } dist(R^i, R^j) \geq 5 \mbox{ for all } i \neq j \in [m]\right\} \notag  \,.  $ 
That is, $G_m$ is the set of $m$-tuples of regions which $\mathcal{A}$ is trying to find spine vertices in. Finally, define 
\begin{itemize}
\item $ \overline{P}_t = \max_{(R^1, \ldots, R^m) \in G_m} \sum_{i=1}^m P^*_t(R^i) \,. $
\end{itemize}

The proof shows that  
$\mathbb{E} [\overline{P}_{t+1} - \overline{P}_t] \leq 15 \log_2(k)$ for  all $t$. 
This implies that if $\mathcal{A}$  makes at most $T$ queries, then 
$ \mathbb{E} [\overline{P}_T - \overline{P}_0]  \leq 15 T \log_2(k) \,.$

When $\mathcal{A}$ succeeds, it has found $m$ spine vertices in regions at least five apart, and therefore $\overline{P}_T = 0$. By Lemma \ref{lem:random-spine-unlikely-to-hit-vertex}, no vertex (other than those in the first and last regions, which we can effectively ignore) initially has a chance higher than $17^k / (2L+1)^{\lfloor k/2 \rfloor}$ to be on the spine, which upper-bounds $\overline{P}_0$. Applying Markov's inequality to $\overline{P}_T - \overline{P}_0$ then upper-bounds $\mathcal{A}$'s success probability. 
\end{proof}

\begin{restatable}{mylemma}{lemmamanyregionqueriesrequired}
\label{lem:many-region-queries-required}

Let $\mathcal{A}$ be a deterministic algorithm for $TARSKI(n, k)$ that receives an input drawn from $\mathcal{U}$. Let $\delta, \epsilon \in (0, 1)$ be such that $2\delta + 2\epsilon < 1$. Suppose $k(n-1)/\rho > 2^{3/(1-2\delta-2\epsilon)}$.

There exists $c = c(\delta, \epsilon)$ such that if $\Pr [\text{$\mathcal{A}$ succeeds}] \geq 1-\delta$, then $\mathcal{A}$ queries spine vertices in at least $c \log(k(n-1)/\rho)$ regions with probability at least $\epsilon$.
\end{restatable}

%

\begin{proof}[Proof sketch] We provide a proof sketch here; the complete proof can be found in Appendix~\ref{app:proof_main_theorem}.

By the construction of a multi-dimensional herringbone function, finding the index $j$ of the fixed point $s^j$ appears to require solving ordered search on the other spine vertices. We formalize this by constructing a randomized algorithm $\mathcal{A}'$ for ordered search that works by simulating $\mathcal{A}$ on an input with a synthetic, random spine. When $\mathcal{A}'$ is itself run on a uniform distribution over ordered search instances, the simulated $\mathcal{A}$ is run on inputs from $\mathcal{U}$. We then use lower bounds on ordered search to complete the proof.
\end{proof}

We can now prove a lower bound on the randomized query complexity of $TARSKI(n,k)$.

\begin{proof}[Proof of Theorem~\ref{thm:lb_k_log_squared_n}]
We proceed by invoking Yao's lemma.
    Let  $\mathcal{U}$ be the uniform distribution over the set of functions $\mathcal{F}$. 
  Let $\mathcal{A}$ be the deterministic algorithm with the smallest possible expected number of queries that succeeds with probability at least $4/5$, where both the expected query count and the success probability are for input drawn from $\mathcal{U}$.
    The algorithm $\mathcal{A}$ exists since there is a finite number of deterministic algorithms for this problem\footnote{Strictly speaking, there are infinitely many algorithms if we allow querying the same vertex multiple times. However, these algorithms are strictly worse than equivalent versions that query each vertex at most once, and there are only finitely many of those.}.
    
    Let $D$ be the expected number of queries issued by $\mathcal{A}$ on input drawn from $\mathcal{U}$.
    Let $R$ be the randomized query complexity of $TARSKI(n,k)$; i.e. the expected number of queries required to succeed with probability at least $9/10$.
    Then Yao's lemma (\cite{yao1977minimaxprinciple}, Theorem 3) yields $2R \ge D$. 
    Therefore it  suffices to lower bound $D$. 

    We argue next that up to a factor of $2$, we can assume that all queries are inside the tube $T_L$. 
    Let  $\overline{TARSKI}(n,k)$ be the Tarski search problem where the input is drawn uniformly at random from $\mathcal{F}$ and    
    the algorithm is only allowed to  query vertices from the tube.
By Lemma \ref{lem:outside-query-simulation}, whenever $\mathcal{A}$ queries a vertex $v$ outside the tube, it could find instead two vertices $a,b$ inside the tube such that the answers to queries $a$ and $b$ carry at least as much information as the answer to query $v$.
Therefore, we can transform $\mathcal{A}$ into an algorithm $\overline{\mathcal{A}}$ for $\overline{TARSKI}(n, k)$ that makes at most $2D$ queries in expectation.

Thus from now on we lower bound the expected number of queries of algorithms that are only allowed to query vertices in the tube.

The success probability of $\overline{\mathcal{A}}$ on inputs drawn from $\mathcal{U}$ is at least $4/5$. By Lemma \ref{lem:many-region-queries-required}, for $\epsilon = 1/5$, there exists $c = c(\epsilon)$ such that, with probability at least $\epsilon$, algorithm $\overline{\mathcal{A}}$ queries spine vertices in at least $c \log (k(n-1)/\rho)$ regions. Whenever it does so, taking every fifth region that $\overline{\mathcal{A}}$ queries a spine vertex in gives at least $m = \left \lfloor \frac{c}{5} \log(k(n-1)/\rho)\right\rfloor$ regions, all of which are distance at least $5$ from each other; in other words, algorithm  $\overline{\mathcal{A}}$ has $m$-surveyed the spine. Therefore:
\begin{align}
    \Pr_{f \sim \mathcal{U}} \left[\text{$\overline{\mathcal{A}}$ $m$-surveys the spine of $f$} \right] &\geq \epsilon \,. 
    \label{eq:abar-likely-to-survey-spine}
\end{align}
%

%
\vspace{-2mm}
By Lemma \ref{lem:many-far-apart-spine-vertices-hard} applied to algorithm $\overline{\mathcal{A}}$, for any $T \in \mathbb{N}$:
\begin{align}
    \Pr_{f \sim \mathcal{U}} \left[ \text{$\overline{\mathcal{A}}$ $m$-surveys the spine within $T$ queries on input $f$}\right] &\leq \frac{15 T \log_2(k)}{\left(m -2\right) \bigl(\lfloor k/2 \rfloor \log_2 (2L + 1) - k \log_2{17}\bigr)}  \label{eq:probability-a-bar-surveys-in-T-small}
\end{align}
Applying  \eqref{eq:probability-a-bar-surveys-in-T-small} with 
\[
T_{\epsilon} = \frac{1}{30 \log_2(k)} \cdot  {\epsilon \left(m-2\right) \cdot \left(\lfloor k/2 \rfloor \log_2 (2L + 1) - k \log_2{17}\right)}\]
gives:
\begin{align}
    \Pr_{f \sim \mathcal{U}} \left[ \text{$\overline{\mathcal{A}}$ $m$-surveys the spine within $T_{\epsilon}$ queries on input $f$}\right] &\leq \frac{\epsilon}{2} \,. 
\end{align}
By \eqref{eq:abar-likely-to-survey-spine}, we must therefore have 
\[ \Pr_{f \sim \mathcal{U}} \left[ \text{$\overline{\mathcal{A}}$ makes more than $T_{\epsilon}$ queries on input $f$}\right] \geq {\epsilon}/{2} \,.
\]

Thus  the expected number $D$ of queries $\overline{\mathcal{A}}$ makes on inputs drawn from $\mathcal{U}$ must be at least:
\begin{align}
    D &\geq T_{\epsilon} \cdot \frac{\epsilon}{2} 
    = \frac{\epsilon^2 \left(\left\lfloor \frac{c}{5} \log(k(n-1) / \rho) \right\rfloor-2\right) \cdot \left(\lfloor k/2 \rfloor \log_2 (2L + 1) - k \log_2{17}\right)}{60 \log_2(k)} \,. \label{eq:D_lower_bound_final}
\end{align}
Since $\epsilon$ and $c$ are constants, setting $\rho = k\sqrt{n-1}$ and $ L = \frac{1}{12} \sqrt{n-1}$ in  \cref{eq:D_lower_bound_final} implies that  $D \in \Omega(k \log^2(n) / \log(k))$, as desired.
\end{proof}



\section{An $O(k \log n \log(nk))$ upper bound for multi-dimensional  herringbone functions} \label{sec:upper_bound_distribution_U}

We also provide a deterministic upper bound on the query complexity of finding a fixed point of any multi-dimensional herringbone function, which is quite close to the lower bound of \cref{thm:lb_k_log_squared_n}. We include the theorem statement and proof sketch here; see Appendix~\ref{app:upper_bound_distribution_U} for the full proof.

\begin{restatable}{mytheorem}{theoremherringboneupperbound}
 \label{thm:herringbone-upper-bound}
    There exists an $O(k \log(n) \log(nk))$-query algorithm to find the fixed point of a multi-dimensional herringbone function.
\end{restatable}
\begin{proof}[Proof sketch]
Our algorithm uses an $O(k \log n)$-query subroutine that finds, for an input $x$, the spine vertex with Hamming weight $x$. This subroutine exploits the dependence of the off-spine function values on the shape of the spine to perform $k$ independent binary searches along the $k$ axes. Using this subroutine, we can run binary search along the spine to find the fixed point, which takes $O(\log (nk))$ iterations.
\end{proof}

\bibliographystyle{alpha}

\bibliography{tarski_bib}

\appendix 

\section{Multi-dimensional Herringbone Functions} \label{app:properties_herringbone_functions}

In this section, we include several proofs showing  properties of multi-dimensional herringbone functions.

\lemmauniquefixedpoint*
\begin{proof}
    By definition, $s^j$ is a fixed point. No other point on the spine can be a fixed point, since they each map to a different point on the spine.

    All that remains is to show that there are no fixed points off the spine. Let $v$ be an arbitrary vertex not on the spine. By definition, $h^{\mathbf{s}, j}(v)$ differs from $v$ in two ways:
    \begin{itemize}
        \item The coordinate $a$ for which $s^{\mu(v)+1}_a = s^{\mu(v)}_a + 1$ is increased by $1$, and
        \item The coordinate $b$ for which $s^{M(v)}_b = s^{M(v)-1}_b + 1$ is decreased by $1$.
    \end{itemize}
    As long as $a \neq b$, these two changes do not cancel each other out, and $v$ is not a fixed point.
    
    Assume for contradiction that $a = b$. Then, since $s^{\mu(v)} \leq v$ but $s^{\mu(v) + 1} \not \leq v$, we have $v_a < s^{\mu(v)+1}_a$. Similarly, we have $s^{M(v) - 1}_b < v_b$, or equivalently $s^{M(v) - 1}_a < v_a$. Since $v$ is not on the spine, moving along the spine from a point $\leq v$ to a point $\geq v$ requires at least two moves, so $\mu(v) + 2 \leq M(v)$; therefore, $s^{\mu(v)+1} \leq s^{M(v)-1}$, so we have:
    \begin{equation}
        v_a < s^{\mu(v)+1}_a \leq s^{M(v)-1}_a < v_a,
    \end{equation}
    which is a contradiction. Therefore, $a \neq b$, so $v$ is not a fixed point.
\end{proof}

Several later arguments will be simplified by the following symmetry.
\begin{lemma}\label{lem:herringbone-axis-inversion-symmetry}
    Let $\vec{s}$ be a spine and let $j \in \{0, \ldots, (n-1)k\}$ be a fixed point index. Let $\vec{t}$ be the spine created by inverting $\vec{s}$ along all axes, so that for all $i \in \{0, \ldots, (n-1)k\}$:
    \begin{align}
        \vec{t}^i = (n-1, \ldots, n-1) - \vec{s}^{(n-1)k-i}
    \end{align}
    Then the corresponding multi-dimensional herringbone function $h^{\vec{t}, (n-1)k-j}$ is an axis-inverted version of $h^{\vec{s}, j}$. Specifically, for all $v \in \{0, \ldots, (n-1)\}^k$, we have:
    \begin{align}\label{eq:axis-inversion-symmetry-condition}
        h^{\vec{t}, (n-1)k-j}(v) = (n-1, \ldots, n-1) - h^{\vec{s}, j}\bigl( (n-1, \ldots, n-1) - v \bigr)
    \end{align}
\end{lemma}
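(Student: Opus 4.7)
The plan is to verify \eqref{eq:axis-inversion-symmetry-condition} by direct computation, handling each case of Definition~\ref{def:general-herringbone} separately. Throughout, let me write $N = (n-1,\ldots,n-1)$ and $v' = N - v$. The main tool will be a pair of index-arithmetic identities that translate $M$ and $\mu$ under the inversion.

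First, I would verify that $\vec{t}$ is itself a valid spine according to Definition~\ref{def:spine}: $\vec{t}^0 = N - \vec{s}^{(n-1)k} = 0$, $\vec{t}^{(n-1)k} = N - \vec{s}^0 = N$, and $\vec{t}^{i+1} - \vec{t}^i = \vec{s}^{(n-1)k-i} - \vec{s}^{(n-1)k-i-1} > 0$ by monotonicity of $\vec{s}$. Next, I would observe the on-spine bijection: $v = \vec{t}^i$ if and only if $v' = \vec{s}^{(n-1)k-i}$. Using this, I would derive the two translation identities
\begin{align}
\mu^{\vec{t}}(v) &= (n-1)k - M^{\vec{s}}(v'), \\
M^{\vec{t}}(v) &= (n-1)k - \mu^{\vec{s}}(v'),
\end{align}
each by rewriting the defining $\max$/$\min$ in terms of $\vec{s}$ and substituting $m = (n-1)k - i$.

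With these in hand, I would handle the four cases in the definition of $h^{\vec{t}, (n-1)k-j}(v)$. The fixed-point case and the two on-spine cases follow directly from the bijection: for example, if $v = \vec{t}^i$ with $i < (n-1)k - j$, then $v' = \vec{s}^{(n-1)k-i}$ with $(n-1)k-i > j$, so $h^{\vec{s},j}(v') = \vec{s}^{(n-1)k-i-1}$, and indeed $N - \vec{s}^{(n-1)k-i-1} = \vec{t}^{i+1} = h^{\vec{t}, (n-1)k-j}(v)$. For the off-spine case, the translation identities give
\begin{align}
\vec{t}^{\mu^{\vec{t}}(v)+1} - \vec{t}^{\mu^{\vec{t}}(v)} &= \vec{s}^{M^{\vec{s}}(v')} - \vec{s}^{M^{\vec{s}}(v')-1}, \\
\vec{t}^{M^{\vec{t}}(v)} - \vec{t}^{M^{\vec{t}}(v)-1} &= \vec{s}^{\mu^{\vec{s}}(v')+1} - \vec{s}^{\mu^{\vec{s}}(v')},
\end{align}
so the two offset terms in $h^{\vec{t}, (n-1)k-j}(v)$ are exactly the offset terms in $h^{\vec{s},j}(v')$ with signs swapped. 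Adding $v = N - v'$ then yields $h^{\vec{t}, (n-1)k-j}(v) = N - h^{\vec{s},j}(v')$.

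The only real obstacle is keeping the index bookkeeping consistent, particularly checking that $\max$ and $\min$ exchange correctly under the substitution $i \mapsto (n-1)k - i$; once those identities are in place, every case reduces to a short calculation.
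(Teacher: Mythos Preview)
Your proposal is correct and follows essentially the same route as the paper's own proof: both derive the index identities $\mu^{\vec{t}}(v) = (n-1)k - M^{\vec{s}}(v')$ and $M^{\vec{t}}(v) = (n-1)k - \mu^{\vec{s}}(v')$, then check the on-spine cases directly and the off-spine case by showing the two offset vectors swap roles under the inversion. Your version additionally verifies explicitly that $\vec{t}$ is a spine, which the paper leaves implicit, but otherwise the arguments are the same.
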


\begin{proof}
    Consider an arbitrary $v \in \{0, \ldots, (n-1)\}^k$. The first three cases of Definition \ref{def:general-herringbone} are straightforward:
    \begin{itemize}
        \item If $v = t^{(n-1)k-j}$, then by definition $(n-1, \ldots, n-1) - v = s^j$. Therefore, both inputs are fixed points of their respective functions, so \eqref{eq:axis-inversion-symmetry-condition} simplifies to $v=v$.
        \item If $v = t^i$ for some $i < (n-1)k-j$, then $h^{\vec{t}, (n-1)k-j}(v) = t^{i+1}$. We would also have $(n-1, \ldots, n-1) - v = s^{(n-1)k-i}$, where $(n-1)k-i > j$; therefore, $h^{\vec{s}, j}\bigl( (n-1, \ldots, n-1) - v \bigr) = s^{(n-1)k - i - 1} = (n-1, \ldots, n-1) - t^{i+1}$. Plugging these values into \eqref{eq:axis-inversion-symmetry-condition} satisfies the equation.
        \item If $v = t^i$ for some $i > (n-1)k-j$, then $h^{\vec{t}, (n-1)k-j}(v) = t^{i-1}$. We would also have $(n-1, \ldots, n-1) - v = s^{(n-1)k-i}$, where $(n-1)k-i < j$; therefore, $h^{\vec{s}, j}\bigl( (n-1, \ldots, n-1) - v \bigr) = s^{(n-1)k - i + 1} = (n-1, \ldots, n-1) - t^{i-1}$. Plugging these values into \eqref{eq:axis-inversion-symmetry-condition} satisfies the equation.
    \end{itemize}
    For the general case that $v$ is not on the spine $\vec{t}$, we need to consider $M(v)$ and $\mu(v)$. To disambiguate these functions for the two different spines, let $M_s$ and $\mu_s$ denote the versions for $\vec{s}$; let $M_t$ and $\mu_t$ denote the versions for $\vec{t}$.

    We first consider $M_t(v)$:
    \begin{align}
        M_t(v) &= \min_{t^i \geq v} i \\
        &= \min_{s^{(n-1)k-i} \leq (n-1, \ldots, n-1) - v} i \\
        &= (n-1)k - \max_{s^{(n-1)k - i} \leq (n-1, \ldots, n-1) - v} (n-1)k - i \\
        &= (n-1)k - \mu_s\bigl((n-1, \ldots, n-1) - v\bigr)
    \end{align}
    By similar algebra, for $\mu_t(v)$ we have:
    \begin{align}
        \mu_t(v) &= \max_{t^i \leq v} i \\
        &= \min_{s^{(n-1)k-i} \geq (n-1, \ldots, n-1) - v} i \\
        &= (n-1)k - \min_{s^{(n-1)k - i} \geq (n-1, \ldots, n-1) - v} (n-1)k - i \\
        &= (n-1)k - M_s\bigl((n-1, \ldots, n-1) - v\bigr)
    \end{align}
    We can now compute the two ``corrections" in $h^{\vec{t}, (n-1)k-j}(v)$ from $v$. First, the one involving $M_t(v)$:
    \begin{align}
        t^{M_t(v)} - t^{M_t(v)-1} &= t^{(n-1)k - \mu_s\bigl((n-1, \ldots, n-1) - v\bigr)} - t^{(n-1)k - \mu_s\bigl((n-1, \ldots, n-1) - v\bigr) - 1} \\
        &= s^{\mu_s\bigl((n-1, \ldots, n-1) - v\bigr)} - s^{\mu_s\bigl((n-1, \ldots, n-1) - v\bigr) + 1} \label{eq:t-M-correction-equals-s-mu-correction}
    \end{align}
    For $\mu_t(v)$ we have:
    \begin{align}
        t^{\mu_t(v) + 1} - t^{\mu_t(v)} &= t^{(n-1)k - M_s\bigl((n-1, \ldots, n-1) - v\bigr) + 1} - t^{(n-1)k - M_s\bigl((n-1, \ldots, n-1) - v\bigr)} \\
        &= s^{M_s\bigl((n-1, \ldots, n-1) - v\bigr) - 1} - s^{M_s\bigl((n-1, \ldots, n-1) - v\bigr)} \label{eq:t-mu-correction-equals-s-M-correction}
    \end{align}
    Using \eqref{eq:t-M-correction-equals-s-mu-correction} and \eqref{eq:t-mu-correction-equals-s-M-correction}, we can compute $h^{\vec{t}, (n-1)k-j}(v)$:
    \begin{align}
        &h^{\vec{t}, (n-1)k-j}(v) \\
        = &v + \left(t^{\mu_t(v) + 1} - t^{\mu_t(v)}\right) - \left(t^{M_t(v)} - t^{M_t(v)-1}\right) \\
        =& v + \left(s^{M_s\bigl((n-1, \ldots, n-1) - v\bigr) - 1} - s^{M_s\bigl((n-1, \ldots, n-1) - v\bigr)}\right) - \left(s^{\mu_s\bigl((n-1, \ldots, n-1) - v\bigr)} - s^{\mu_s\bigl((n-1, \ldots, n-1) - v\bigr) + 1}\right) \\
        =& (n-1, \ldots, n-1) - h^{\vec{s}, j}\bigl( (n-1, \ldots, n-1) - v\bigr),
    \end{align}
    which completes this final case.
\end{proof}

\lemmamonotone*
\begin{proof}
    Consider two vertices $v$ and $w$ such that $v \leq w$. There are several cases, depending on whether $v$, $w$, both, or neither are on the spine.
    \paragraph{Case $v$, $w$ both on the spine.} Then $h^{\mathbf{s}, j}(v)$ and $h^{\mathbf{s}, j}(w)$ are both on the spine as well. They can't pass each other since then they'd pass the fixed point, but they only go one step along the spine at a time. Therefore, $h^{\mathbf{s}, j}(v) \leq h^{\mathbf{s}, j}(w)$.
    \paragraph{Case $v$ not on spine, $w$ on spine.} We first show that $h^{\mathbf{s}, j}(v) \leq s^{M(v)-1}$. Since $v \leq s^{M(v)}$, the coordinates $i$ for which $(h^{\mathbf{s}, j}(v))_i \leq v_i$ and $s^{M(v)}_i = s^{M(v)-1}_i$ are safe. The only possible exceptions are $a$ and $b$, where $(h^{\mathbf{s}, j}(v))_a = v_a + 1$ (and therefore $s^{\mu(v)+1}_a = s^{\mu(v)}_a+1$) and $s^{M(v)}_b = s^{M(v)-1}_b + 1$. These are precisely the coordinates where $v$ differs from $h^{\mathbf{s}, j}(v)$, so by Lemma \ref{lem:h-unique-fixedpoint}, $a \neq b$.
    \begin{itemize}
        \item[$a$.] Since $s^{\mu(v)} \leq v$ but $s^{\mu(v)+1} \not \leq v$, we must have $v_a + 1 = s^{\mu(v)+1}_a \leq s^{M(v)-1}_a$. Therefore, $(h^{\mathbf{s}, j}(v))_a = v_a + 1 \leq s^{M(v)-1}_a$.
        \item[$b$.] By definition, $(h^{\mathbf{s}, j}(v))_b = v_b - 1$. Since $s^{M(v)} \geq v$ but $s^{M(v)-1} \not \geq v$, we must have $v_b - 1 = s^{M(v)-1}_b$. Therefore, $(h^{\mathbf{s}, j}(v))_b = s^{M(v)-1}_b$.
    \end{itemize}
    Therefore, $h^{\mathbf{s}, j}(v) \leq s^{M(v)-1} \leq h^{\mathbf{s}, j}(w)$.
    \paragraph{Case $v$ on spine, $w$ not on spine.} By applying Lemma \ref{lem:herringbone-axis-inversion-symmetry}, we can exchange the roles of $v$ and $w$, making this case symmetric to the previous.
    \paragraph{Case $v$, $w$ both not on the spine.} Only two coordinates are possibly a problem: the one where $v$ increases, and the one where $w$ decreases. By Lemma \ref{lem:herringbone-axis-inversion-symmetry}, inverting the axes maps one of these coordinates into the other, so we only consider the coordinate $a$ on which $v$ increases. There are three subcases depending on how $v_a$ compares to $w_a$.
    \begin{itemize}
        \item \textbf{Case $v_a + 2 \leq w_a$.} We have $(h^{\mathbf{s}, j}(v))_a = v_a + 1 \leq w_a - 1 \leq (h^{\mathbf{s}, j}(w))_a$.
        \item \textbf{Case $v_a + 1 = w_a$.} This is only a problem if $w$ decreases in the $a$ dimension. Assume for contradiction that it does. Then $s^{M(w)-1}_a = s^{M(w)}_a - 1$. Since $s^{M(w)} \geq w$ but $s^{M(w)-1} \not \geq w$, we have $w_a - 1 = s^{M(w)}_a - 1 = s^{M(w)-1}_a$. Similarly, $v_a + 1 = s^{\mu(v)+1}_a$. But $\mu(v) + 1 \leq M(v) - 1 \leq M(w) - 1$, so $v_a + 1 \leq w_a - 1$, contradicting $v_a + 1 = w_a$. Therefore, $w$ does not decrease in the $a$ dimension, so $(h^{\mathbf{s}, j}(v))_a \leq (h^{\mathbf{s}, j}(w))_a$.
        \item \textbf{Case $v_a = w_a$.} As before, $v_a = s^{\mu(v)}_a = s^{\mu(v) + 1}_a - 1$. But since $w_a = v_a$, we then have $w \not \geq s^{\mu(v)+1}$. Since $w \geq v \geq \mu(v)$, we have $\mu(w) = \mu(v)$, so $w$ also increases in the $a$ dimension.     
        \end{itemize}

    In all cases, $h^{\mathbf{s}, j}(v) \leq h^{\mathbf{s}, j}(w)$, so $h^{\mathbf{s}, j}$ is monotone.
\end{proof}

\section{Hard Distribution} \label{app:hard_distribution}

Here we include the proofs of statements  from Section~\ref{sec:hard_distribution}.

\lemmaoutsidequerysimulation*
\begin{proof}
    The desired $a$ and $b$ can be defined coordinate-wise as:
    \begin{align}
        a_i &= \min \{ \min_j v_j + 2L, v_i\} \\
        b_i &= \max \{ \max_j v_j - 2L, v_i\}
    \end{align}
    
    Computing $f(v)$ requires knowing the increasing component $s^{\mu(v)+1} - s^{\mu(v)}$ and the decreasing component $s^{M(v)} - s^{M(v)-1}$. We will show that $\mu(v) = \mu(a)$, which is sufficient for the first of these as $s^{\mu(a)+1} - s^{\mu(a)}$ is the increasing component of $f(a) - a$. This will also cover $b$, as inverting the axes swaps the roles of $a$ and $b$ and, by Lemma \ref{lem:herringbone-axis-inversion-symmetry}, will mean $f(b)$ identifies the decreasing component of $f(v)$.

    For a point $x$, let $D(x) = \{ y \mid y \leq x\}$ be the set of points dominated by $x$ in $\{0, \ldots, n-1\}^k$. It will be shown that:
    \begin{equation}
        T_L \cap D(v) = T_L \cap D(a)
    \end{equation}

    Certainly $T_L \cap D(v) \supseteq T_L \cap D(a)$, since $v \geq a$. To show the other direction, rewrite $T_L$ as the union of $k$ hypercubes:
    \begin{align}
        C_L^i &= \{ y \mid \text{for all $j \in [k]$, $|y_j - i| \leq L$}\} \\
        T_L &= \bigcup_{i=1}^k C_L^i
    \end{align}

    For $i > \min_j v_j + L$, $C_L^i$ is disjoint from $D(v)$, as the minimum coordinate of $v$ is then too small. For $i \leq \min_j v_j + L$, they are not disjoint, and in fact we claim each such intersection has a unique maximum $c^i$ whose coordinates are:
    \begin{equation}
        c^i_{j} = \min \{i + L, v_j\}
    \end{equation}

    Any point not dominated by $c^i$ will either fail to be in $C_L^i$ (by having a coordinate larger than $i+L$) or fail to be in $D(v)$ (by having a coordinate larger than one of $v$'s). Each $c^i$ lies in the corresponding $C_L^i$, as $|(i+L) - i| = L$ and, for $v_j \leq i+L$:
    \begin{align}
        -L &\leq i - v_j \\
        &\leq (\min_{\ell} v_{\ell} + L) - v_j \\
        &\leq L
    \end{align}

    Each $c^i$ is also in $D(v)$, as $c^i \leq v$. Therefore, for $i \leq \min_j v_j + L$, the unique maximum element of $C_L^i$ is $c^i$. Since the formula for $c^i$ is monotone increasing in $i$, the maximum of these maxima is $c^{\min_j v_j + L} = a$. Therefore, $T_L \cap D(v) \subseteq T_L \cap D(a)$, which together with the other direction shows $T_L \cap D(v) = T_L \cap D(a)$. Since $\mu(v)$ is the index of the maximum spine vertex within $D(v)$ and the spine exists entirely within $T_L$, $\mu(v) = \mu(a)$.

\end{proof}

\lemmatubequerylocality*
\begin{proof}
    If $v$ is on the spine, then $f(v)$ is by definition determined by the spine vertex $v$ and its neighbors.

    If $v$ is not on the spine, the value of $f(v)$ depends on $\mu(v)$ and $M(v)$. The spine vertices $s^{\mu(v)}$ and $s^{M(v)}$ each share a coordinate with $v$, so bounding the distance between points in $T_L$ which share a coordinate will bound the locality of $f$.

    Fix a coordinate index $\kappa \in [k]$ and coordinate value $r \in \{0, \ldots, n-1\}$. Rewrite $T_L$ as the union of hypercubes:
    \begin{align}
        C_L^i &= \{ y \mid \text{for all $j \in [k]$, $|y_j - i| \leq L$}\} \\
        T_L &= \bigcup_{i=0}^{n-1} C_L^i
    \end{align}
    For $i \not \in [r - L, r + L]$, $C_L^i$ contains no points with $\kappa$-coordinate $r$. For $i \in [r - L, r + L]$, all points in $C_L^i$ have values in $[i - L, i + L] \subseteq [r - 2L, r + 2L]$. Therefore, the coordinates of $v$ can only differ from those of $s^{\mu(v)}$ and $s^{M(v)}$ by $2L$. The value of $f(v)$ depends on $s^{\mu(v)}$, $s^{M(v)}$, and the closer vertices $s^{\mu(v) + 1}$ and $s^{M(v) - 1}$. All of these have coordinates within $2L$ of those of $v$, which completes the proof.
\end{proof}

\lemhelpdefinitionspine*
\begin{proof}
    We first show that $z$ is well-defined. Since $\mathcal{C}(v)$ is closed and convex, the set $\mathcal{C}(v) \cap \ell(a, b)$ is itself a closed line segment. Because $a \leq b$, the slope of $\ell(a, b)$ is nonnegative in every coordinate. Therefore, the higher endpoint of $\mathcal{C}(v) \cap \ell(a, b)$ has coordinates which are weakly larger than every other point in $\mathcal{C}(v) \cap \ell(a, b)$, so it can be taken as $z$.

    If $v=b$, then we are done. Otherwise, suppose $v \neq b$.
    For contradiction, suppose $z$ did not lie on any forward face of $\mathcal{C}(v)$. Then there exists some $\varepsilon > 0$ such that, if each coordinate of $z$ is increased by an amount in $[0, \varepsilon]$, the resulting point remains in $\mathcal{C}(v)$. 
    Since $a \leq b$, moving along $\ell(a, b)$ can only increase coordinates. 
Let $x$ be  a point on the line $\ell(a, b)$ such that $x > z$ and $ \|x-z\|_{\infty} \leq \varepsilon$. Such a point $x$ exists since $z < b$ due to the requirement $v \neq b$. Moreover, by the choice of $\varepsilon$, we have $x \in \mathcal{C}(v)$. This contradicts $z$ being the maximum point in $\mathcal{C}(v) \cap \ell(a, b)$.
    Thus the assumption must have been false and  $z$ lies on some forward face of $\mathcal{C}(v)$, as required.
\end{proof}

\lemsequencesjbetweenchis*
\begin{proof}
We show by induction on $i$ that a sequence of vertices $s^1, s^2, \ldots, s^i$ with $s^1 = u$ can be chosen such that 
\begin{enumerate}[(I)]
\item $(s^{1}, \ldots, s^{i})$ is a sequence of vertices in the lattice graph $\{0, \ldots, n-1\}^k$ such that for some $m \leq i$, we have  $(s^1, \ldots, s^m)$ is a monotone connected path  and either $s^m = v$ or $m=i$. 
\item $\mathcal{C}(s^i) \cap \ell(u, v) \neq \emptyset$.
\end{enumerate}

 The base case is $i=1$, using $m=i=1$ and $s^1 = u$. The path $(s^1)$ consists of one point, so it is clearly monotone. Since one endpoint of $\ell(u, v)$ is $u$, the cube  $\mathcal{C}(u)$ certainly intersects $\ell(u, v)$. 

\begin{figure}[h!]
\centering 
\includegraphics[scale=0.368]{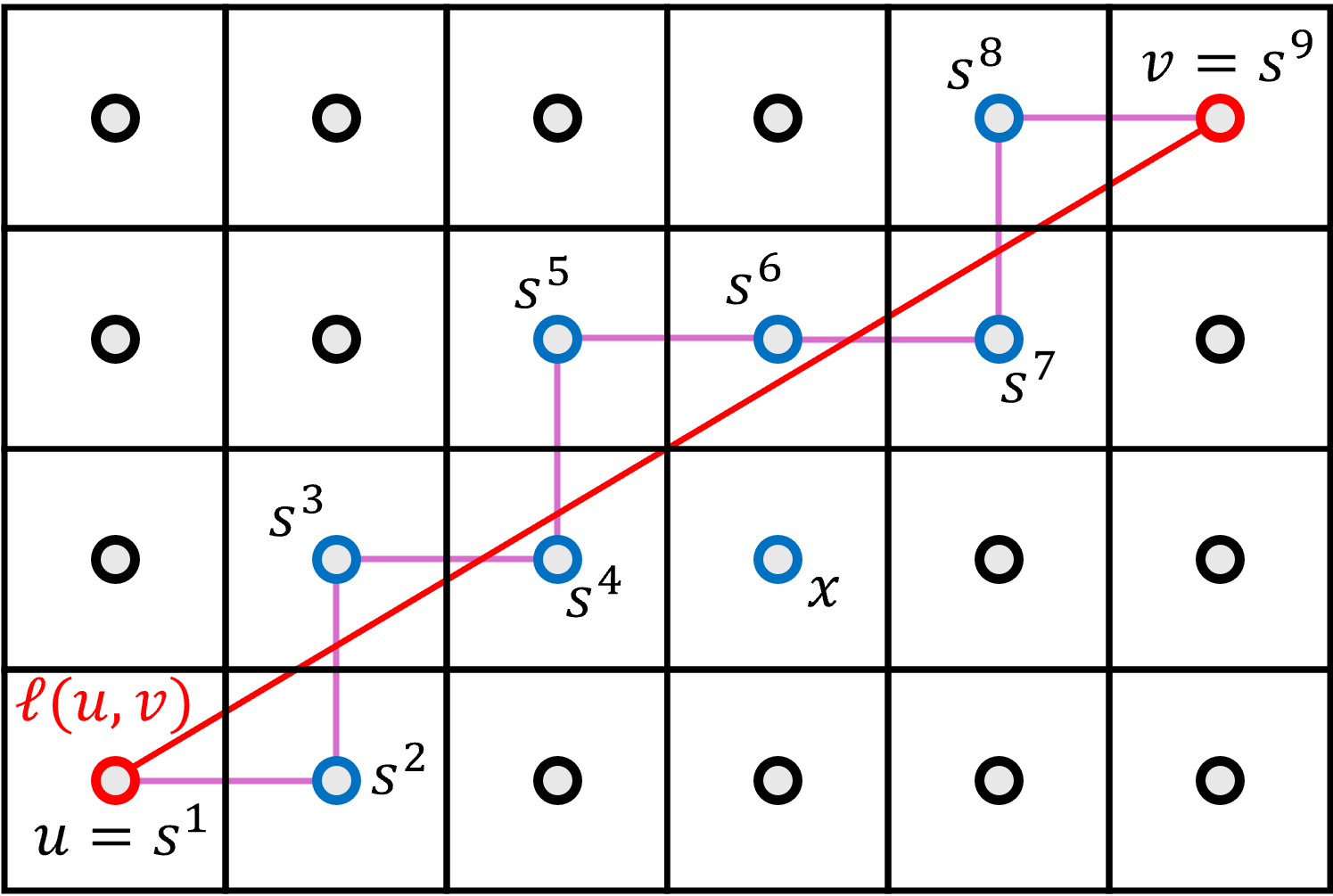}
\caption{A 2-dimensional example of the construction of Lemma \ref{def:sequence_s_j_between_chi_i_and_chi_i_plus_rho}, with endpoints $u$ and $v$ and the associated spine vertices $s^1$ through $s^9$. Each vertex is drawn surrounded by its cube. The vertex $x$ was also an option for $s^5$ due to $\ell(u, v)$ passing through a corner of the cubes, but was arbitrarily discarded.}
\label{fig:line-through-grid}
\end{figure}

     Assume the sequence of vertices $s^1, \ldots, s^{i-1}$ has been chosen such that (I) and (II) hold for some $m \leq i-1$. We consider three cases:
     \begin{description} 
     \item[$(1)$] If $m < i-1$, then we are guaranteed that $s^m = v$. Setting $s^i = s^m$ will  satisfy (I) and (II).
     \item[$(2)$] If $m = i-1$ and $s^{i-1} = v$, then the construction is finished: we can take $m = i - 1$ and $s^i = s^m$.
     \item[$(3)$] Else,
     let 
     $ z_{i-1} = \max \{\mathcal{C}({s^{i-1}}) \cap \ell(u, v)\}\,.$ 
     The point $z_{i-1}$ is  well-defined by part (II) of the inductive hypothesis.    

    For each coordinate $j \in [k]$, let $w_j = s^{i-1} + \vec{e}_j$. The  face of $\mathcal{C}(s^{i-1})$ with maximum value in coordinate $j$ is shared with the cube $\mathcal{C}(w_j)$. By Lemma~\ref{lem:help_definition_spine}, the point $z_{i-1}$ lies on a forward face of $\mathcal{C}({s^{i-1}})$, so there exists $j \in [k]$ such that  $z_{i-1} \in \mathcal{C}(w_j)$. Then choose $s^{i}$ to be the lexicographically smallest point in the set $\{w_1, \ldots, w_k\}$ with the property that $z_{i-1} \in \mathcal{C}(s^{i})$. 

\begin{figure}[h!]
\centering 
\includegraphics[scale=0.69]{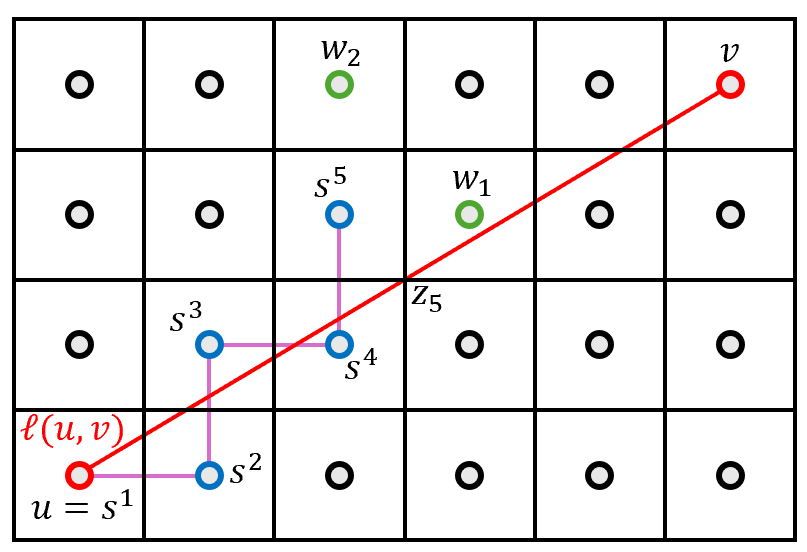}
\caption{
The example from Figure \ref{fig:line-through-grid}, where so far only $\vec{s} = (s^1, \ldots, s^5)$ has been selected. In the next iteration ($i=6$),  we need to choose $s^6$. Since $v \not \in \vec{s}$, it must be the case that $m =5$. Then we are in case (3), which defines a  point $z_5 = \max \{\mathcal{C}({s^{5}}) \cap \ell(u, v)\}$, and two candidates $w_1$ and $w_2$ for $s^6$. Of these two candidates, only $w_1$ has the property that $z_5 \in \mathcal{C}(w_1)$, so 
we must choose $s^6 = w_1$. }
\end{figure}
    
 We  show that (I) and (II) continue to hold. Since $(s^1, \ldots, s^{i-1})$ is a monotone connected path and $s^i = s^{i-1} + \vec{e}_j$ for some $j \in [k]$, the whole path $(s^1, \ldots, s^i)$ is also monotone and connected. Thus (I) holds. Also, $z_{i-1} \in \mathcal{C}(s^i)$, so (II) holds, completing the induction step.
 \end{description}

What remains  is showing that this construction eventually triggers the $s^m = v$ case. The line segment $\ell(u, v)$ is finite, so it can be enclosed by some finite hypercube. However, because $s^1, s^2, \ldots$ is a strictly increasing monotone sequence in $\{0, \ldots, n-1\}^k$, it will eventually leave any bounded region. Since each $s^i$ satisfies $\mathcal{C}(s^i) \cap \ell(u, v) \neq \emptyset$, there can only be finitely many $s^i$ before $s^m = v$.
\end{proof}

\lemqueryaffectsadjacentonly*
\begin{proof}
     By Lemma \ref{lem:tube-query-locality}, the value of $f(v)$ depends only on spine vertices with coordinates each within $2L$ of $v$. In particular, this only includes vertices with Hamming weight within $2kL$ of $wt(v)$. Because $\rho > 2kL$, all such vertices are in one of $R_{\alpha - \rho}$, $R_{\alpha}$, and $R_{\alpha + \rho}$ -- up to one region away from $v$.

    These spine vertices in turn are defined by the spine's endpoints in the boundaries of $R_{\alpha - \rho}$, $R_{\alpha}$, and $R_{\alpha + \rho}$; specifically, the spine's intersection with $Low_{\alpha + i\rho}$, for $i \in \{-1, \ldots, 2\}$. By the construction of $\mathcal{U}$, the spine's intersections with all further boundaries are independent of these. Therefore, the spine in regions $R_{\alpha - \rho}$ through $R_{\alpha + \rho}$ is independent of the spine in any region $R_{\alpha + i\rho}$ for $i < -2$ or $i > 2$, which completes the proof.
\end{proof}

\section{Proof of the Main Theorem} \label{app:proof_main_theorem}

This section includes the proofs of lemmas from Section~\ref{sec:proof_main_theorem}.

\lemmasizelowa*
\begin{proof}
    For all $x_1, \ldots, x_{\lfloor k/2 \rfloor} \in \{-L, \ldots, L\}$, define 
    \begin{align}
        g(x_1, \ldots, x_{\lfloor k/2 \rfloor}) = \begin{cases}
            \left( \frac{a}{k} + x_1, \frac{a}{k} - x_1, \frac{a}{k} + x_2, \frac{a}{k} - x_2, \ldots, \frac{a}{k} + x_{\lfloor \frac{k}{2} \rfloor}, \frac{a}{k} - x_{\lfloor \frac{k}{2} \rfloor}\right) & \text{if $k$ is even} \\
            \left( \frac{a}{k} + x_1, \frac{a}{k} - x_1, \frac{a}{k} + x_2, \frac{a}{k} - x_2, \ldots, \frac{a}{k} + x_{\lfloor \frac{k}{2} \rfloor}, \frac{a}{k} - x_{\lfloor \frac{k}{2}\rfloor}, \frac{a}{k}\right) & \text{if $k$ is odd}
        \end{cases}
    \end{align}
    We show that $g(x_1, \ldots, x_{\lfloor \frac{k}{2}\rfloor}) \in Low_a$ by arguing that  $g(x_1, \ldots, x_{\lfloor \frac{k}{2}\rfloor}) \in T_L$ and its Hamming weight is $a$. 

    We first argue that $g(x_1, \ldots, x_{\lfloor \frac{k}{2} \rfloor}) \in \{0, \ldots, n-1\}^k$. Let $i = a/k$. Since $a \in \mathcal{I}$, we have $k \mid a$, so $i \in \mathbb{N}$ and $g(x_1, \ldots, x_{\lfloor \frac{k}{2} \rfloor}) \in \mathbb{Z}^k$. Because $a \in \mathcal{I} \setminus \{0, k(n-1)\}$, by definition of $\mathcal{I}$ it must be the case that $a \in [\rho, k(n-1)-\rho]$, so:
    \begin{align}
        \frac{a}{k} \geq \frac{\rho}{k} \geq 12L 
        \qquad \mbox{and} \qquad
        \frac{a}{k} \leq \frac{k(n-1)-\rho}{k} \leq (n-1) - 12L \,.
    \end{align}
    Therefore adding terms of magnitude at most $L$ to each coordinate of $(i, \ldots, i)$ leaves $g(x_1, \ldots, x_{\lfloor \frac{k}{2} \rfloor})$ in $\{0, \ldots, n-1\}^k$.
    
    We now argue that $g(x_1, \ldots, x_{\lfloor \frac{k}{2} \rfloor}) \in T_L$ because each of its components is within $L$ of $i$. For each $j \in \{1, \ldots, \lfloor k/2 \rfloor \}$, we have 
    \begin{align}
        \left| \frac{a}{k} + x_j - i \right| = |x_j| \leq L \qquad \mbox{and} \qquad  
        \left| \frac{a}{k} - x_j - i \right|  = |x_j| \leq L \,.
    \end{align}
    Thus the condition \eqref{eq:condition_to_be_in_tube} from the definition of the tube holds, so $g(x_1, \ldots, x_{\lfloor \frac{k}{2} \rfloor}) \in T_L$.
    
    We have that it has Hamming weight $a$ 
    because each of the $x_j$'s cancel out when adding up the coordinates. Therefore, $g(x_1, \ldots, x_{\lfloor \frac{k}{2} \rfloor}) \in Low_a$.

As different values of the $x_j$'s result in different values of their corresponding coordinates, each choice of $x_1, \ldots, x_{\lfloor \frac{k}{2} \rfloor} \in \{-L, \ldots, L\}$ yields a distinct point. Thus $|Low_a| \geq (2L+1)^{\lfloor k/2 \rfloor}$ as required.
\end{proof}

\lemmarandomspineunlikelyhitvertex*

\begin{proof} 
    For each $u \in Low_a$, we have $wt(u) = a$. Similarly, for each $v \in Low_b$, we have $wt(v) = b$. Lemma~\ref{def:sequence_s_j_between_chi_i_and_chi_i_plus_rho} gives a  monotone path  $\vec{s}(u,v) = (s^1, \ldots, s^m)$ from $u$ to $v$, so each vertex $s^i$ on the path  satisfies $a \leq wt(s^i) \leq b$. 
    If $wt(w) < a$ or $wt(w) > b$, then $w$ cannot be one of the vertices $s^i$, so $Pr\bigl[w \in \vec{s}(U,V) \bigr] = 0$, thus inequality \eqref{eq:w_hits_sequence_s} holds.

    Thus from now we can assume  $wt(w) \in [a, b]$.
   Without loss of generality, we assume $w$ lies closer to $Low_b$ than $Low_a$, i.e.:
    \begin{align}\label{eq:wt-w-ge-a-plus-b-over-2}
        wt(w) \geq \frac{a+b}{2}\,.
    \end{align}
    Fix an arbitrary vertex ${u} \in Low_a$. Let $B$ denote the hyperplane:
    \begin{align}
        B = \{\vec{x} \in \mathbb{R}^k \mid x_1 + \ldots + x_k = b\}
    \end{align}
   Let $\mathcal{P}(w)$ denote the projection of $\mathcal{C}(w)$ onto $B$ from $u$, i.e.:
    \begin{align}
        \mathcal{P}(w) = \{ \vec{x} \in B \mid \mathcal{C}(w) \cap \ell(u, \vec{x}) \neq \emptyset \},
    \end{align}
  recalling that $\mathcal{C}(w)$ is the cube of side-length $1$ centered at $w$. 
  
  By construction, $\vec{s}(u, V)$ consists only  of vertices $x \in \{0,\ldots, n-1\}^k$ for which $\mathcal{C}(x)$ intersects the line segment $\ell(u,V)$ from $u$ to $V$. Then   $w \in \vec{s}(u, V)$ only if $V \in \mathcal{P}(w)$.

    To bound the number of lattice points in $\mathcal{P}(w)$, we will upper-bound the distance between any two points in $\mathcal{P}(w)$. Consider two arbitrary points $p, q \in \mathcal{P}(w)$. Arbitrarily choose $\overline{p} \in \mathcal{C}(w) \cap \ell(u, p)$ and $\overline{q} \in \mathcal{C}(w) \cap \ell(u, q)$.

    We claim that $\overline{p} \neq u$ and show it by arguing the two points have very different Hamming weight. Consider an arbitrary $x \in \mathcal{C}(w)$. Since $\|x-w\|_{\infty} \leq 1/2$, we have 
\begin{align} \label{eq:lb_wt_x}
wt(x)  &\geq wt(w) - \frac{1}{2} \cdot k \,.
\end{align}
    Then 
    \begin{align} 
        wt(x) - wt(u) &\geq wt(w) - \frac{1}{2} \cdot k - wt(u) \explain{By \eqref{eq:lb_wt_x}}  \\
        &\geq \frac{a+b}{2} - \frac{k}{2} - a = \frac{b-a}{2} - \frac{k}{2}\explain{By \eqref{eq:wt-w-ge-a-plus-b-over-2}} \\
        &\geq \frac{b-a}{2} - \frac{b-a}{24L} \explain{By choice of $a,b$} \notag \\
        &\geq \frac{11}{24}(b-a) \,. \label{eq:big_difference_wt_x_and_u}
    \end{align}
Inequality \eqref{eq:big_difference_wt_x_and_u} implies that $x \neq u$. In particular, we have $wt(\overline{p}) > wt(u)$ since $\overline{p} \in \mathcal{C}(w)$ and $b-a > 0$.

    Then define
    \begin{align}\label{eq:q-tilde-in-terms-of-pbar}
        \widetilde{q} = u + (\overline{q} - u) \frac{wt(p - u)}{wt(\overline{p} - u)}\,.
    \end{align}
    That is, $\widetilde{q}$ is the point that $\overline{q}$ would map to if  $\mathcal{C}(w)$ were proportionally scaled from $u$ by the factor required to map $\overline{p}$ to $p$.

\begin{figure}[h!]
    \centering 
    \includegraphics[scale=0.6]{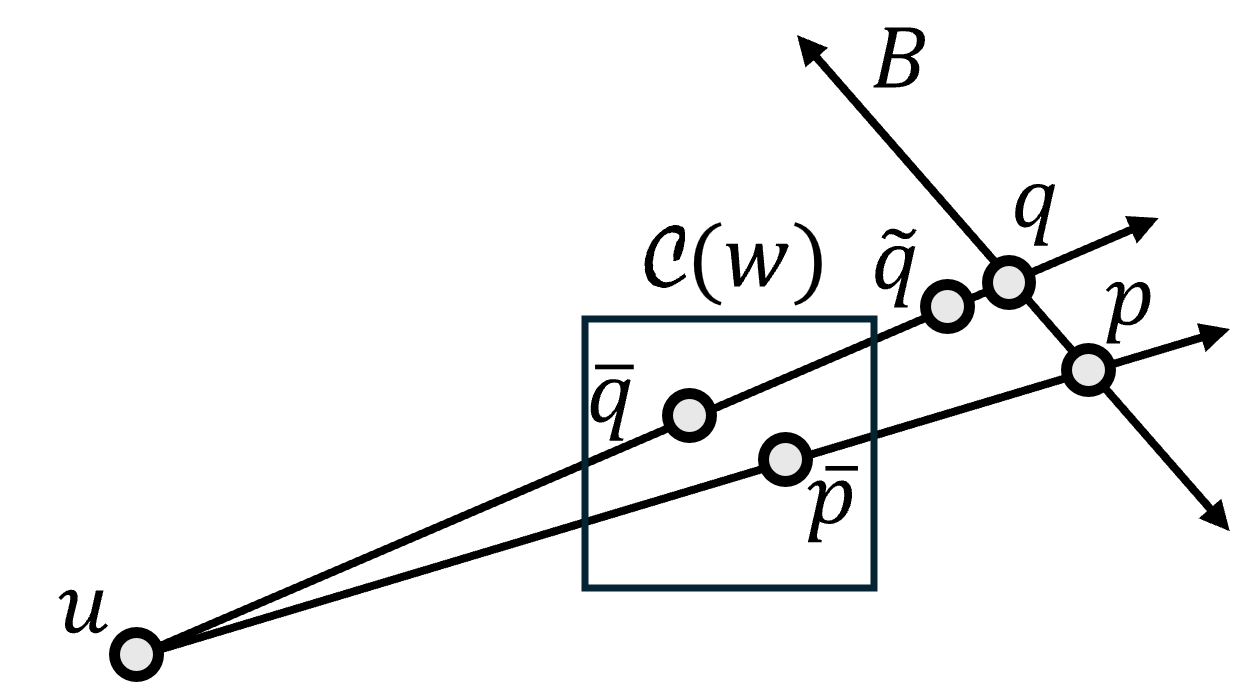}
    \caption{The geometry of Lemma \ref{lem:random-spine-unlikely-to-hit-vertex}, including only what is defined through equation \eqref{eq:q-tilde-in-terms-of-pbar}. The point $u \in Low_a$ is arbitrary. Points $p$ and $q$ lie on the plane $B$ that contains $Low_b$ such that $\mathcal{C}(w)$ (for a vertex $w$ not pictured) lies between them and $u$. Lines from $u$ to each of $p$ and $q$ are also drawn, with points $\overline{p}$ and $\overline{q}$ being arbitrary points within $\mathcal{C}(w)$ and each of these lines. Point $\widetilde{q}$ is chosen so that triangles $\overline{q}u\overline{p}$ and $\widetilde{q}up$ are similar.}
    \label{fig:random-spine-lemma-geometry-lite}
\end{figure}

    Since  $\overline{p}, \overline{q} \in \mathcal{C}(w)$, which is a hypercube of side length $1$, we have:
    \begin{equation}\label{eq:p-q-norm-leq-1}
        \|\overline{p} - \overline{q}\|_{\infty} \leq 1\,.
    \end{equation}

    
    We now argue that \eqref{eq:p-q-norm-leq-1} implies a similar inequality on $\|p - \widetilde{q}\|_{\infty}$. Let $\mathbbm{1} = (1, \ldots, 1)$ and let $\theta$ be the angle between $\mathbbm{1}$ and $(p-u)$; since $(p-u)$ and $(\overline{p} - u)$ are parallel, this is also the angle between $(\overline{p} - u)$ and $\mathbbm{1}$. Because $(p-u)$ is parallel to $(\overline{p} - u)$:
    \begin{align}
        p &= u + (p - u) \\
        &= u + (\overline{p} - u) \frac{\|p-u\|_2 \|\mathbbm{1}\|_2 \cos \theta}{\|\overline{p} - u\|_2 \|\mathbbm{1}\|_2 \cos \theta} \explain{Since $(p-u)$ is parallel to $(\overline{p}-u)$} \\
        &= u + (\overline{p} - u) \frac{\langle p-u, \mathbbm{1} \rangle}{\langle \overline{p} - u, \mathbbm{1} \rangle} \\
        &= u + (\overline{p} - u) \frac{wt(p-u)}{wt(\overline{p}-u)} \label{eq:p-in-terms-of-pbar}
    \end{align}
    Combining \eqref{eq:q-tilde-in-terms-of-pbar} and \eqref{eq:p-in-terms-of-pbar} gives:
    \begin{align}
        \|p - \widetilde{q}\|_{\infty} &= \left\|(\overline{p}-\overline{q})\frac{wt(p-u)}{wt(\overline{p}-u)} \right\|_{\infty} = \|\overline{p} - \overline{q}\|_{\infty} \frac{wt(p-u)}{wt(\overline{p}-u)} \notag \\
        & = \|\overline{p} - \overline{q}\|_{\infty} \frac{b-a}{wt(\overline{p}-u)} \explain{Since $p \in B$, so $wt(p)=b$, and $u \in Low_a$, so $wt(u)=a$}\\
        &\leq 1 \cdot \frac{b-a}{\left(\frac{11}{24}(b-a)\right)} \explain{By \eqref{eq:big_difference_wt_x_and_u} applied to $x=\overline{p}$ and \eqref{eq:p-q-norm-leq-1}} \\
        &= \frac{24}{11} \,. \label{eq:diff_p_q_tilde_at_most_3}
    \end{align}

    We now show that $\|q - \widetilde{q}\|_{\infty}$ is small. Intuitively, since $p \in B$, by \eqref{eq:diff_p_q_tilde_at_most_3} we have that $\widetilde{q}$ is close to $B$; also since $q \in B$ and $q$ is collinear with $\widetilde{q}$ and $u$, they must be close to each other.
    To formally show this, consider $(\overline{q} - u)$, which is parallel to $(q - \widetilde{q})$. We define $\overline{B}$ as the plane parallel to $B$ through $\overline{q}$:
    \begin{align}
        \overline{B} = \Bigl\{\vec{x} \in \mathbb{R}^k \mid x_1 + \ldots + x_k = wt(\overline{q})\Bigr\}\,.
    \end{align}
    Let $z$ be the projection of $u$ onto $\overline{B}$ in the direction $\mathbbm{1}$. 

    We have:
    \begin{align}
        wt(z) - wt(u) & = wt(\overline{q}) - wt(u) \explain{Because $z \in \overline{B}$}\\
        & \geq \frac{11}{24}(b-a) \explain{By applying \eqref{eq:big_difference_wt_x_and_u} to $\overline{q}$.}\\
        \label{eq:wt_z_minus_wt_u_lb}
    \end{align}

    Since $z - u$ has slope $(1, \ldots, 1)$ by definition of $z$, the coordinates of $z-u$ are all the same. Therefore for all $i \in [k]$:
    \begin{align}
        (z-u)_i & \geq \frac{11}{24k}(b-a) \explain{Using \eqref{eq:wt_z_minus_wt_u_lb}}  \\
        & \geq \frac{11}{2}L  \label{eq:q_minus_u_i_lb}
    \end{align}

    Define the continuous version $T^C_L$ of the tube $T_L$ as:
    \begin{align}
        T^C_L = \Bigl\{\vec{x} \in \mathbb{R}^k \mid \exists \alpha \in \mathbb{R} \mbox{ such that } |x_j - \alpha| \leq L\  \forall j \in [k] \Bigr\} \label{eq:continuous-tube-definition}
    \end{align}
    Let $\vec{y} \in T_L^C \cap \overline{B}$ be arbitrary, and let $\alpha_y$ be its value of $\alpha$ from \eqref{eq:continuous-tube-definition}. Because $\vec{y} \in \overline{B}$, we have the pair of inequalities:
    \begin{align}
        wt(\overline{q}) &= \sum_{i=1}^k y_i \leq k (\alpha_y + L) \label{eq:wt-y-sum-upper-bound} \\
        wt(\overline{q}) &= \sum_{i=1}^k y_i \geq k (\alpha_y - L) \label{eq:wt-y-sum-lower-bound}
    \end{align}
    Solving each of \eqref{eq:wt-y-sum-lower-bound} and \eqref{eq:wt-y-sum-upper-bound} for $\alpha_y$ gives $|\alpha_y - (wt(\overline{q})/k)| \leq L$. As $\vec{y} \in T_L^C \cap \overline{B}$ was arbitrary, this bound applies to the corresponding values $\alpha_{\overline{q}}$ and $\alpha_z$ for $\overline{q}$ and $z$:
    \begin{align}
        \left| \alpha_{\overline{q}} - \frac{wt(\overline{q})}{k} \right| &\leq L \label{eq:alpha-q-close-to-wt-q-over-k} \\
        \left| \alpha_{z} - \frac{wt(\overline{q})}{k} \right| &\leq L \label{eq:alpha-z-close-to-wt-q-over-k}
    \end{align}
    We now get:
    \begin{align}
        \|\overline{q} - z\|_{\infty} &\leq (\max (\alpha_{\overline{q}}, \alpha_z) + L) - (\min (\alpha_{\overline{q}}, \alpha_z) - L) \explain{As $\overline{q}, z \in T^C_L$}\\
        &= |\alpha_{\overline{q}} - \alpha_z| + 2L \\
        &\leq L + L + 2L \explain{By the triangle inequality with \eqref{eq:alpha-q-close-to-wt-q-over-k} and \eqref{eq:alpha-z-close-to-wt-q-over-k}} \\
        &= 4L \label{eq:qbar-minus-z-small}
    \end{align}

\begin{figure}[h!]
        \centering 
        \includegraphics[scale=0.6]{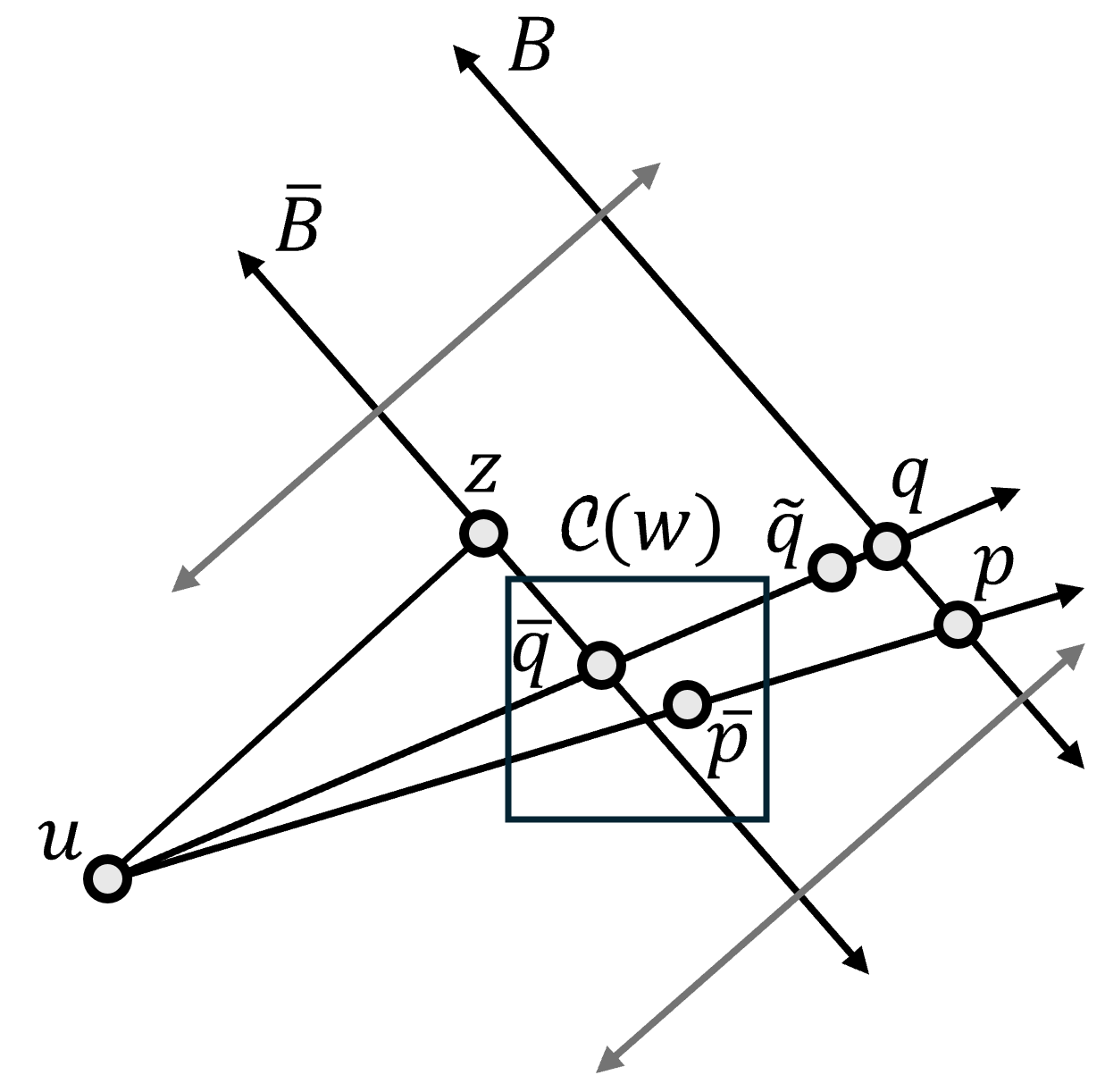}
        \caption{The geometry of Lemma \ref{lem:random-spine-unlikely-to-hit-vertex}. In addition to what is pictured in Figure \ref{fig:random-spine-lemma-geometry-lite}, this diagram includes the plane $\overline{B}$, which is parallel to $B$ and goes through $\overline{q}$. The point $z \in \overline{B}$ is such that the segment $uz$ is perpendicular to $\overline{B}$. The unlabeled lighter lines in the top-left and bottom-right represent the edges of the tube $T_L$.}
    \end{figure}

    Since all the components of $z-u$ are the same, we have 
    \begin{align}
        \max_{i, j \in [k]} \frac{(\overline{q}-u)_i}{(\overline{q}-u)_j}  =  \max_{i, j \in [k]} \frac{(\overline{q}-z)_i + (z-u)_i}{(\overline{q}-z)_j + (z-u)_j} =  \max_{i, j \in [k]} \frac{(\overline{q}-z)_i + (z-u)_i}{(\overline{q}-z)_j + (z-u)_i} \label{eq:ratio_q_bar_minus_u_i_j}
\end{align}
Inequality \eqref{eq:q_minus_u_i_lb} gives $(z-u)_i \geq \frac{11}{2}L$, so 
\begin{align}
    -4L + (z-u)_i > 0\,. \label{eq:minus_2L_plus_z_minus_u_i}
\end{align}
We obtain 
\begin{align} 
        \max_{i, j \in [k]} \frac{(\overline{q}-z)_i + (z-u)_i}{(\overline{q}-z)_j + (z-u)_i}  & \leq \max_{i, j \in [k]} \frac{4L + (z - u)_i}{-4L + (z-u)_i} \explain{By \eqref{eq:qbar-minus-z-small} and \eqref{eq:minus_2L_plus_z_minus_u_i}} \\
        & \leq \frac{\frac{11}{2}L + 4L}{\frac{11}{2}L - 4L} \explain{By \eqref{eq:q_minus_u_i_lb}} \\
        &= \frac{19}{3}\,. \label{eq:max_i_j_ratio_q_minus_z_over_z_minus_u_at_most_9}
    \end{align}

    Combining \eqref{eq:ratio_q_bar_minus_u_i_j}  and \eqref{eq:max_i_j_ratio_q_minus_z_over_z_minus_u_at_most_9} yields 
    \begin{align}\label{eq:qbar-minus-u-coordinate-ratio}
        \max_{i, j \in [k]} \frac{(\overline{q}-u)_i}{(\overline{q}-u)_j} \leq \frac{19}{3}\,.
    \end{align}

    Because $\overline{q}-u$ and $q - \widetilde{q}$ are parallel, we have $(q-\widetilde{q}) = \beta (\overline{q}-u)$ for some $\beta \in \mathbb{R}$. Therefore applying \eqref{eq:qbar-minus-u-coordinate-ratio} gives 
    \begin{align}\label{eq:q-minus-qtilde-coordinate-ratio}
        \max_{i, j \in [k]} \frac{(q-\widetilde{q})_i}{(q-\widetilde{q})_j} = \max_{i, j \in [k]} \frac{\beta (\overline{q}-u)_i}{\beta (\overline{q}-u)_j} \leq \frac{19}{3} \,.
    \end{align}
    We also have 
    \begin{align}
    |wt(\widetilde{q}) - b| &= \left| \sum_{i=1}^k \widetilde{q}_i - p_i \right| \explain{Since $p \in B$, we have $\sum_{i=1}^k p_i=b$} \\
    &\leq \sum_{i=1}^k \left| \widetilde{q}_i - p_i\right| \explain{By triangle inequality} \\
    &\leq \frac{24}{11}k\,. \explain{Since $\|p - \widetilde{q}\|_{\infty} \leq 24/11$ by \eqref{eq:diff_p_q_tilde_at_most_3}}  \\
    \label{eq:up_wt_q_tilde_minus_b_absolute}
    \end{align}
    We define the unit-weight vector in the direction of $q-\widetilde{q}$ as
    \begin{align} d = \frac{q-\widetilde{q}}{wt(q-\widetilde{q})} \,.
    \end{align}
    We then have:
    \begin{align}
        \max_{i \in [k]} d_i &= \left(\min_{j \in [k]} d_j\right) \cdot \left(\max_{i, j \in [k]} \frac{d_i}{d_j}\right) \\
        &\leq \frac{1}{k} \cdot \left(\max_{i, j \in [k]} \frac{d_i}{d_j}\right) \\
        &\leq \frac{19}{3k} \explain{Since $d$ is parallel to $q-\widetilde{q}$, so the bound of \eqref{eq:q-minus-qtilde-coordinate-ratio} applies}\\
        \label{eq:9_over_k_bound}
    \end{align}
    Therefore:
    \begin{align}
        \|q - \widetilde{q}\|_{\infty} &= \|d\|_{\infty} \cdot |wt(\tilde{q}) - b| \\
        &\leq \frac{19}{3k} \cdot \frac{24}{11}k \explain{By \eqref{eq:9_over_k_bound} and \eqref{eq:up_wt_q_tilde_minus_b_absolute}}\\
        &= \frac{152}{11}  \,. \label{eq:q_minus_q_tilde_norm_ub}
    \end{align}

    Inequality \eqref{eq:diff_p_q_tilde_at_most_3} gives $\|p - \widetilde{q}\|_{\infty} \leq 24/11$. Since  $\|q - \widetilde{q}\|_{\infty} \leq 152/11$ by \eqref{eq:q_minus_q_tilde_norm_ub}, applying the triangle inequality gives   
    \begin{equation}
        \|p - q\|_{\infty} \leq \|p - \widetilde{q}\|_{\infty}  + \|q - \widetilde{q}\|_{\infty} \leq \frac{24}{11} + \frac{152}{11} = 16\,.
    \end{equation}

    Since this bound applies to any two points $p, q \in \mathcal{P}(w)$, all of $\mathcal{P}(w)$ fits inside a hypercube of side length $16$. This hypercube contains at most $(16+1)^k$ lattice points, so  at most $17^k$ realizations of $V$ could result in $\vec{s}(u, V)$ passing through $w$. Since the total number of realizations of $V$ is $|Low_b|$, we have 
    \begin{align}
        \Pr\Bigl[w \in \vec{s}(u,V)\Bigr] &\leq \frac{17^k}{|Low_b|} \\
        &\leq \frac{17^k}{(2L+1)^{\lfloor k/2 \rfloor}}\,. \explain{By Lemma \ref{lem:size_Low_a}}
    \end{align}
    This bound applies to any fixed $u \in Low_a$, so it also holds when $u$ is chosen uniformly at random from $Low_a$. Thus for $U \sim \mathcal{U}(Low_a)$ and $V \sim \mathcal{U}(Low_b)$ we have 
    \begin{align}
        \Pr\Bigl[w \in \vec{s}(U, V)\Bigr] &\leq \frac{17^k}{(2L+1)^{\lfloor k/2 \rfloor}},
    \end{align}
    as required.
\end{proof}

\begin{lemma}\label{lem:rv-max-less-than-kc}
    Let $X_1, \ldots, X_m$ be nonnegative, possibly correlated random variables. Let $C \in \mathbb{R}$ be such that $\mathbb{E}[X_i] \leq C$ for all $i \in [m]$. Then:
    \begin{align}
        \mathbb{E}\left[\max_{i \in [m]} X_i\right] &\leq mC \,.
    \end{align}
\end{lemma}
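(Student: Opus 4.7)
The statement is essentially a one-line observation, so my proof plan is correspondingly short. The key idea is that for nonnegative random variables, the maximum is bounded above pointwise by the sum, and then linearity of expectation handles the rest.

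Concretely, I would first observe that since each $X_i \geq 0$, we have the pointwise inequality
\begin{align}
\max_{i \in [m]} X_i \leq \sum_{i=1}^m X_i,
\end{align}
which holds on every sample point of the underlying probability space. Next, I would take expectations of both sides and apply linearity of expectation (which holds regardless of whether the $X_i$ are independent or correlated, and is what makes the ``possibly correlated'' caveat harmless here) to obtain
\begin{align}
\mathbb{E}\left[\max_{i \in [m]} X_i\right] \leq \sum_{i=1}^m \mathbb{E}[X_i].
\end{align}
Finally, I would use the hypothesis $\mathbb{E}[X_i] \leq C$ for each $i \in [m]$ to bound the right-hand side by $mC$, completing the proof.

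There is no real obstacle here; the nonnegativity assumption is essential (otherwise the pointwise bound fails, e.g.\ if some $X_i$ were large and negative), so I would make sure to flag exactly where it is used. The statement is a crude union-bound-style inequality, but that is all that will be needed in the application to $\overline{P}_t = \max_{(R^1,\ldots,R^m) \in G_m} \sum_{i=1}^m P^*_t(R^i)$, where a tight bound on the maximum-of-sums is not required.
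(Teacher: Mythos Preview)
Your proposal is correct and matches the paper's own proof essentially line for line: bound the max pointwise by the sum using nonnegativity, apply linearity of expectation, then bound each term by $C$. There is nothing to add.
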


\begin{proof}
    Because the $X_i$ are nonnegative, their maximum is at most their sum. Therefore:
    \begin{align}
        \mathbb{E}\left[\max_{i \in [m]} X_i\right] &\leq \mathbb{E}\left[\sum_{i=1}^m X_i\right] 
        = \sum_{i=1}^m \mathbb{E}[X_i] 
        \leq mC \,.
    \end{align}
\end{proof}

\lemmaFvsizelessthankcubed*

\begin{proof}
    Consider the roles $v$ might play for an arbitrary spine $\vec{s}$ and index $j$:
    \begin{itemize}
        \item If $v$ is the fixed point of $h^{\vec{s}, j}$, then $h^{\vec{s}, j}(v) = v$.
        \item If $v$ is on the spine $\vec{s}$ but is not the fixed point, then $h^{\vec{s}, j}(v)$ differs from $v$ in exactly one coordinate, and is either one more or one less in that coordinate. There are $2k$ ways to choose such a value.
        \item If $v$ is not on the spine $\vec{s}$, then $h^{\vec{s}, j}(v)$ differs from $v$ in exactly two coordinates: one in which $h^{\vec{s}, j}$ is one bigger, and one in which it is one smaller. There are $k(k-1)$ ways to choose such a value.
    \end{itemize}
    Therefore:
    \begin{align}
        |F(v)| &\leq 1 + 2k + k(k-1) = k^2 + k + 1 \leq k^3, \label{eq:size-F-v-less-than-k-cubed}
    \end{align}
    which completes the proof.
\end{proof}

\lemmamanyfarapartspineverticeshard*

\begin{proof}
In this proof, all probabilities and expectations are taken over the input function $f \sim \mathcal{U}$.

For each region $R$, define the following random variables:
\begin{itemize}
\item $P_t(R) = \max_{v \in R} \log_2 \Pr \left[\text{$v$ is a spine vertex after $t$ queries from $\mathcal{A}$}\right] $.
\item $ P^*_t(R) = \max_{0 \leq t^* \leq t} P_{t^*}(R)$.
\end{itemize}
Define the set:
\begin{align}
    G_m = \left\{(R^1, \ldots, R^m) \mid R^1, \ldots, R^m \mbox{ are regions, } dist(R^i, R^j) \geq 5 \mbox{ for all } i \neq j \in [m]\right\} \,.
\end{align}
 That is, $G_m$ is the set of $m$-tuples of regions which $\mathcal{A}$ is trying to find spine vertices in. Finally, let:
\begin{align}
    \overline{P}_t &= \max_{(R^1, \ldots, R^m) \in G_m} \sum_{i=1}^m P^*_t(R^i) \,. 
\end{align}
We aim to upper-bound $\mathbb{E}[\overline{P}_{t+1} - \overline{P}_t]$. To do so, we condition on the history $\mathcal{H}_t$ of the first $t$ queries made by $\mathcal{A}$ and their responses.

As in Lemma \ref{lem:F-v-size-less-than-k-cubed}, for each vertex $v$, let $F(v)$ be the set of possible responses to querying $v$: 
\begin{align}
    F(v) &= \left\{y \in \{0, \ldots, n-1\}^k \mid \exists \vec{s} \in \Psi, j \in \{0, \ldots, k(n-1)\} \mbox{ such that } h^{\vec{s}, j}(v) = y\right\} \,. \notag 
\end{align}

Consider an arbitrary region $R$, and let $q_{t+1}$ be the next query made by $\mathcal{A}$. Let $V_t(R)$ be the set of vertices $v \in R$ which could still be spine vertices:
\begin{align}
    V_t(R) = \left\{v \in R \mid \Pr[\text{$v$ is a spine vertex} \mid \mathcal{H}_t] > 0\right\} \,.
\end{align}
We now upper-bound $\mathbb{E}[P^*_{t+1}(R) - P^*_t(R) \mid \mathcal{H}_t]$. This expectation can be broken into two cases depending on the realizations of $P_{t+1}(R)$ and $P_t(R)$:
\begin{itemize}
    \item If $P_{t+1}(R) > P_t(R)$, then $P^*_{t+1}(R) - P^*_t(R) \leq P_{t+1}(R) - P_t(R)$.
    \item If $P_{t+1}(R) \leq P_t(R)$, then $P^*_{t+1}(R) = P^*_t(R)$.
\end{itemize}
In each case, $P^*_{t+1}(R) - P^*_t(R) \leq \max (0, P_{t+1}(R) - P_t(R))$. We now decompose that expected value based on the response $a_{t+1}$ to $\mathcal{A}$'s query $q_{t+1}$:
\begin{align}
    &\mathbb{E}[P^*_{t+1}(R) -P^*_t(R) \mid \mathcal{H}_t] \\
    &\leq \mathbb{E}[\max (0, P_{t+1}(R) - P_t(R)) \mid \mathcal{H}_t] \\
    &= \sum_{y \in F(q_{t+1})} \Pr \left[a_{t+1} = y \mid \mathcal{H}_t\right] \max \left(0, \log_2 \left(\frac{\max_{v \in R} \Pr[\text{$v$ is a spine vertex} \mid \mathcal{H}_t, a_{t+1} = y]}{\max_{w \in R} \Pr[\text{$w$ is a spine vertex} \mid \mathcal{H}_{t}]}\right)\right) \\
    &\leq \sum_{y \in F(q_{t+1})} \Pr \left[a_{t+1} = y \mid \mathcal{H}_t\right] \max \left(0, \max_{v \in V_t(R)} \log_2 \left(\frac{\Pr [\text{$v$ is a spine vertex} \mid \mathcal{H}_t, a_{t+1} = y]}{\Pr [\text{$v$ is a spine vertex} \mid \mathcal{H}_t]}\right)\right) \,. \label{eq:sum-f-max-log}
\end{align}

For arbitrary $v \in R$ and $y \in F(q_{t+1})$, the quantity inside the logarithm in \eqref{eq:sum-f-max-log} can be upper-bounded by expanding its denominator:
\begin{align}
    &\Pr [\text{$v$ is a spine vertex} \mid \mathcal{H}_t] \\
    &= \sum_{w \in F(q_{t+1})} \Pr \left[a_{t+1} = w \mid \mathcal{H}_t\right] \cdot \Pr [\text{$v$ is a spine vertex} \mid \mathcal{H}_t, a_{t+1} = w] \\
    &\geq \Pr \left[a_{t+1} = y \mid \mathcal{H}_t\right] \cdot \Pr [\text{$v$ is a spine vertex} \mid \mathcal{H}_t, a_{t+1} = y] \,. \label{eq:pr-f-eq-y-times-pr-v-spine}
\end{align}
Dividing both sides of \eqref{eq:pr-f-eq-y-times-pr-v-spine} by $\Pr [\text{$v$ is a spine vertex} \mid \mathcal{H}_t]$ and $\Pr \left[a_{t+1} = y \mid \mathcal{H}_t\right]$ gives:
\begin{align}
    \frac{\Pr [\text{$v$ is a spine vertex} \mid \mathcal{H}_t, a_{t+1} = y]}{\Pr [\text{$v$ is a spine vertex} \mid \mathcal{H}_t]} \leq \frac{1}{\Pr \left[a_{t+1} = y \mid \mathcal{H}_t\right]} \,. \label{eq:pr-v-spine-given-new-query-small}
\end{align}
Applying \eqref{eq:pr-v-spine-given-new-query-small} to \eqref{eq:sum-f-max-log} gives:
\begin{align}
    &\mathbb{E}[P^*_{t+1}(R) - P^*_t(R) \mid \mathcal{H}_t] \\
    &\leq \sum_{y \in F(q_{t+1})} \Pr \left[a_{t+1} = y \mid \mathcal{H}_t\right] \max \left(0, \max_{v \in V_t(R)} \log_2 \left(\frac{1}{\Pr \left[a_{t+1} = y \mid \mathcal{H}_t\right]}\right)\right) \\
    &= \sum_{y \in F(q_{t+1})} - \Pr \left[a_{t+1} = y \mid \mathcal{H}_t\right] \min \left(0, \log_2 \Pr \left[a_{t+1} = y \mid \mathcal{H}_t\right]\right) \\
    &\leq \sum_{y \in F(q_{t+1})} - \Pr \left[a_{t+1} = y \mid \mathcal{H}_t\right] \log_2 \Pr \left[a_{t+1} = y \mid \mathcal{H}_t\right] \,. \label{eq:expected-sr-gain-less-than-entropy}
\end{align}
Because \eqref{eq:expected-sr-gain-less-than-entropy} is precisely the entropy of $a_{t+1}$ given $\mathcal{H}_t$, it can be upper-bounded by the entropy of a uniform distribution over $F(q_{t+1})$. Further applying Lemma \ref{lem:F-v-size-less-than-k-cubed}:
\begin{align}
    \mathbb{E}[P^*_{t+1}(R) - P^*_t(R) \mid \mathcal{H}_t] &\leq \log_2(|F(q_{t+1})|) \leq 3\log_2(k) \label{eq:Pt-change-given-history-small}
\end{align}

We now bound $\mathbb{E} [\overline{P}_{t+1} - \overline{P}_t \mid \mathcal{H}_t]$. This gain is upper-bounded by the expected ``most improved" tuple in $G_m$:
\begin{align}
    \mathbb{E} [\overline{P}_{t+1} - \overline{P}_t \mid \mathcal{H}_t] &\leq \mathbb{E} \left[ \max_{(R^1, \ldots, R^m) \in G_m} \sum_{i=1}^m P^*_{t+1}(R^i) - P^*_t(R^i) \mid \mathcal{H}_t\right] \,.  \label{eq:p-star-gain-less-than-most-improved-tuple}
\end{align}

Each tuple in $G_m$ contains only regions that are at least five apart. By Lemma \ref{lem:query-only-affects-adjacent-regions}, querying $q_{t+1}$ can only update knowledge about the spine in a block of five consecutive regions; call this block $U(q_{t+1})$. Therefore, each individual sum in \eqref{eq:p-star-gain-less-than-most-improved-tuple} contains at most one nonzero term, and all such terms are for one of the five regions in $U(q_{t+1})$. Therefore:
\begin{align}
    \mathbb{E} [\overline{P}_{t+1} - \overline{P}_t \mid \mathcal{H}_t] &\leq \mathbb{E} \left[ \max_{R \in U(q_{t+1})} P^*_{t+1}(R) - P^*_t(R) \mid \mathcal{H}_t\right] \,. 
\end{align}
Because $P^*_{t+1}(R) - P^*_t(R) \geq 0$ for all regions $R$, Lemma \ref{lem:rv-max-less-than-kc} combined with \eqref{eq:Pt-change-given-history-small} applies to give:
\begin{align}
    \mathbb{E} [\overline{P}_{t+1} - \overline{P}_t \mid \mathcal{H}_t] &\leq 15 \log_2(k) \,. \label{eq:Pstar-change-less-than-15-log-k-with-history}
\end{align}
And as the upper bound in \eqref{eq:Pstar-change-less-than-15-log-k-with-history} applies independently of $\mathcal{H}_t$, taking the expectation over $\mathcal{H}_t$ gives the history-independent bound:
\begin{align}
    \mathbb{E} [\overline{P}_{t+1} - \overline{P}_t] &\leq 15 \log_2(k)\,.  \label{eq:Pstar-change-less-than-15-log-k}
\end{align}
Now suppose an algorithm $\mathcal{A}$ makes at most $T$ queries. By telescoping \eqref{eq:Pstar-change-less-than-15-log-k}, we get:
\begin{align}
    \mathbb{E} [\overline{P}_T - \overline{P}_0] &\leq 15 T \log_2(k) \,.
\end{align}
When $\mathcal{A}$ succeeds, it has found $m$ spine vertices in regions at least five apart, and therefore $\overline{P}_T = 0$. By Lemma \ref{lem:random-spine-unlikely-to-hit-vertex}, no vertex other than those in regions $R_0$ and $R_{k(n-1)-\rho}$ initially has a chance higher than $17^k / (2L+1)^{\lfloor k/2 \rfloor}$ to be on the spine. In those regions, we can vacuously upper-bound $P^*_0(R_0)$ and $P^*_0(R_{k(n-1)-\rho})$ by $0$, so:
\begin{align}
    \overline{P}_0 &\leq (m-2) \log_2 \left(\frac{17^k}{(2L+1)^{\lfloor k/2 \rfloor}}\right) \,. 
\end{align}
Therefore, by Markov's inequality, the probability $\mathcal{A}$ succeeds within $T$ queries is at most:
\begin{align}
    \Pr [\text{$\mathcal{A}$ succeeds in $T$ queries}] &\leq \frac{15 T \log_2(k)}{(m-2) \cdot \left(\lfloor k/2 \rfloor \log_2(2L+1) - k \log_2 (17)\right)} \,. 
\end{align}
This completes the proof.
\end{proof}

We now present Lemma \ref{lem:many-region-queries-required}, which states that solving $TARSKI(n, k)$ requires querying spine vertices in many different regions. Its proof is based around the classical ordered search problem: 

\begin{definition}[Ordered search] \label{def:ordered_search_problem}
		The  input is  a bit vector $\vec{x} = (x_1, \ldots, x_m) \in \{0,1\}^m$ with the promise that exactly one bit is set to $1$. The vector can be accessed  via oracle queries of the form: ``Is the $i$-th bit equal to $1$?''. The answer to a query is: ``Yes'', ``No, go left'', or ``No, go right''.
		The task is to find the location of the hidden bit. 
\end{definition}

\lemmamanyregionqueriesrequired* 

The proof of Lemma \ref{lem:many-region-queries-required} depends on lower bounds for ordered search, which we present afterwards.

\begin{proof}
    Whenever $\mathcal{A}$ correctly returns the fixed point, it also learns which region the fixed point is in. We will argue that even learning the region with the fixed point is at least as hard as ordered search on the $m$ regions, where $m = k(n-1) / \rho$.

    To that end, let $\mathcal{A}'$ be the following randomized algorithm for the ordered search problem of \cref{def:ordered_search_problem}, which is given an input $\vec{x} \in \{0, 1\}^m$ with (unknown) answer $j$.
\begin{itemize}
    \item Draw a spine $\vec{t} \in \Psi$ and an offset $\theta \in \{0, \ldots, \rho - 1\}$ uniformly at random. 
    \item Set variables $a = 0$ and $b = k(n-1)$.
    \item Simulate $\mathcal{A}$ on $h^{\vec{t}, \rho j + \theta}$. While $\mathcal{A}'$ does not know $j$, it can still answer each query $v$ submitted by $\mathcal{A}$ as follows:
    \begin{itemize}
        \item If $v \notin \vec{t}$, then compute $h^{\vec{t}, q}(v)$ for an arbitrary $q \in \{0, \ldots, k(n-1)\}$.
        \item If $v \in \vec{t}$, then let $R_{\rho i}$ be the region containing $v$.  Let $\mathcal{A}'$ query the vector $\vec{x}$ at position $i$; if it had already done so, then look up the result of that query instead. Then:
        \begin{itemize}
            \item If the response is ``No, go left", then set $b = \min(b, wt(v)-1)$ and give $\mathcal{A}$ the value of $h^{\vec{t}, q}(v)$ for an arbitrary $q \in [a, b]$.
            \item If the response is ``No, go right", then set $a = \max(a, wt(v)+1)$ and give $\mathcal{A}$ the value of $h^{\vec{t}, q}(v)$ for an arbitrary $q \in [a, b]$.
            \item If the response is ``Yes", then halt; the answer has been found.
        \end{itemize}
    \end{itemize}
    \item Whenever $\mathcal{A}$ outputs $u$ as its answer, $\mathcal{A}'$ returns the index $i$ such that $u \in R_{\rho i}$.
\end{itemize}

We first argue by induction on the number of queries $\mathcal{A}$ makes that:
\begin{itemize}
    \item[(i)] $\mathcal{A}'$ can accurately simulate $h^{\vec{t}, \rho j + \theta}$, given its choice of $\vec{t}$ and $\theta$.
    \item[(ii)] $\mathcal{A}'$ maintains $\rho j + \theta \in [a, b]$.
\end{itemize}
The base case is clear, as $\rho j + \theta \geq 0$ and $\rho j + \theta \leq k(n-1)$.

Now suppose that $\rho j + \theta \in [a, b]$ after some number of queries from $\mathcal{A}$. Its next query $v$ falls into one of a few cases:
\begin{itemize}
    \item $v \notin \vec{t}$. Then by definition, the value of $h^{\vec{t}, \rho j + \theta}(v)$ is independent of $\rho j + \theta$, so $\mathcal{A}'$ can compute it.
    \item $v \in \vec{t}$. Then there is a region $R_{\rho i}$ such  that $v \in R_{\rho i}$, and $\mathcal{A}'$ queries bit $i$ of $\vec{x}$. There are three cases for the resulting query:
    \begin{itemize}
    \item If $\mathcal{A}'$ receives ``No, go left", then $i > j$. Therefore, $wt(v) \geq \rho (j+1)$, so after updating $b$ we maintain $b \geq \rho j + \theta$. The value of $h^{\vec{t}, \rho j + \theta}(v)$ can then be determined from knowing $\vec{t}$ and that $wt(v) > \rho j + \theta$.
    \item If $\mathcal{A}'$ receives ``No, go right", then $i < j$. Therefore, $wt(v) < \rho j$, so after updating $a$ we maintain $a \leq \rho j + \theta$. The value of $h^{\vec{t}, \rho j + \theta}(v)$ can then be determined from knowing $\vec{t}$ and that $wt(v) < \rho j + \theta$.
    \item If $\mathcal{A}'$ receives ``Yes", then it halts immediately, having found its answer of $j$.
    \end{itemize}
\end{itemize}

Therefore, by induction, each response $\mathcal{A}$ receives is consistent with $h^{\vec{t}, \rho j + \theta}$.

If $\mathcal{A}'$ is run on an input distribution which is uniformly random over the $m$ possible answers, then its simulated values of both $\mathbf{t} \in \Psi$ and $\rho j + \theta \in \{0, \ldots, k(n-1)\}$ will be uniformly random as well. Therefore, its simulation of $\mathcal{A}$ will be run on inputs from $\mathcal{U}$.

Now suppose $\mathcal{A}$ is correct with probability at least $1-\delta$ on inputs drawn from $\mathcal{U}$. Then $\mathcal{A}'$ will be correct with probability at least $1-\delta$ on uniformly random inputs. By Lemma \ref{lem:ordered-search-lower-bound}, there exists $c = c(\delta, \epsilon)$ such that on this distribution $\mathcal{A}'$ makes at least $c \log m$ queries with probability at least $\epsilon$. Because each query $\mathcal{A}'$ makes corresponds to a new region queried by $\mathcal{A}$, we have that $\mathcal{A}$ queries at least $c \log m$ regions with probability at least $\epsilon$.
\end{proof}

\begin{lemma}
    \label{lem:ordered-search-lower-bound-deterministic}
    Let $\epsilon, \delta \in (0, 1)$ such that $\delta + \epsilon < 1$. Let $U_m$ be the uniform distribution over ordered search instances of length $m$. Let $\mathcal{A}$ be a deterministic algorithm for ordered search with the property that
     $    \Pr_{\vec{x} \sim U_m}\left[ \mathcal{A} \mbox{ succeeds on } \vec{x} \right] \geq 1-\delta\,. $  
    Then there exists $c > 0$ such that, for $m > 2^{3/(1-\delta-\epsilon)}$: 
    \begin{align}
        \Pr_{\vec{x} \sim U_m} \left[\mathcal{A} \mbox{ issues at least } c \log_2(m) \mbox{ queries on } \vec{x} \right] \geq \epsilon\,.\label{eq:deterministic-ordered-search-many-queries}
    \end{align}
\end{lemma}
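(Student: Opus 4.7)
The plan is to view the deterministic algorithm $\mathcal{A}$ as a ternary decision tree: each internal node queries some index $i$ and branches on the three possible responses (``Yes'', ``No, go left'', ``No, go right''), and each leaf outputs a guess for the hidden index. The key structural fact I will use is that such a tree has at most $3^d$ leaves at depth at most $d$ (by truncating at depth $d$ and comparing to a full ternary tree). Moreover, if $\mathcal{A}$ succeeds on two distinct inputs (two different hidden bit locations), those inputs must reach distinct leaves, because each leaf outputs a single fixed index. Consequently, at most $3^d$ inputs from $U_m$ can be both answered correctly and resolved within $d$ queries.

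I will then argue by contradiction. Suppose $\Pr_{\vec{x} \sim U_m}[\mathcal{A}$ issues at least $c \log_2 m$ queries on $\vec{x}] < \epsilon$ for a constant $c$ to be chosen. Then at least $(1-\epsilon)m$ inputs lead to termination in strictly fewer than $c \log_2 m$ queries; intersecting with the at least $(1-\delta)m$ inputs on which $\mathcal{A}$ succeeds yields at least $(1 - \delta - \epsilon)m$ inputs that $\mathcal{A}$ answers both correctly and quickly. By the structural observation, this count is at most $3^{c \log_2 m} = m^{c \log_2 3}$, so $1 - \delta - \epsilon \leq m^{c \log_2 3 - 1}$.

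To force a contradiction under the hypothesis $m > 2^{3/(1 - \delta - \epsilon)}$, I will choose, for instance, $c = 1/(2 \log_2 3)$. This makes $c \log_2 3 - 1 = -1/2$, so the inequality above becomes $1 - \delta - \epsilon \leq m^{-1/2}$, which fails whenever $m > (1 - \delta - \epsilon)^{-2}$. A short computation—setting $x = (1 - \delta - \epsilon)^{-1} > 1$ and noting that $3x \geq 2 \log_2 x$ holds for all $x \geq 1$—confirms that the hypothesis $m > 2^{3/(1-\delta-\epsilon)}$ already implies $m > (1-\delta-\epsilon)^{-2}$, completing the contradiction. I do not anticipate a serious obstacle; the two minor points to handle carefully are verifying that distinct correctly-answered inputs land on distinct leaves (immediate from determinism plus correctness) and the arithmetic that ties the two lower bounds on $m$ together.
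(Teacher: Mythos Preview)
Your proof is correct. Both your argument and the paper's hinge on the same quantitative fact---that at most $3^{c\log_2 m}$ inputs can be handled correctly within $c\log_2 m$ ternary queries---but the two proofs package it differently. The paper gives an \emph{encoding} argument: it builds a prefix code for a uniform element of $[m]$ by recording either the (at most $2c\log_2 m$ bits of) query responses when $\mathcal{A}$ is fast and correct, or the raw binary expansion otherwise, and then derives a contradiction with Shannon's source coding theorem using $c=1/6$. You instead count leaves in the ternary decision tree directly and intersect the ``fast'' and ``correct'' input sets. Your route is a bit more elementary (no appeal to source coding) and, because it uses the exact branching factor $3$ rather than the $2$-bits-per-response rounding, yields the slightly larger constant $c=1/(2\log_2 3)\approx 0.315$ versus the paper's $1/6$. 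The only places to tidy are cosmetic: make the ``$<$ versus $\le$'' on depth explicit when bounding the leaf count by $3^{c\log_2 m}$, and perhaps state the inclusion--exclusion step $(1-\delta)m+(1-\epsilon)m-m$ explicitly rather than just ``intersecting.''
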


\begin{proof}
    We will use $c = 1/6$. For contradiction, suppose $\mathcal{A}$ violated inequality \eqref{eq:deterministic-ordered-search-many-queries}. Then consider the following encoding scheme for an integer $z \in [m]$:
    \begin{itemize}
        \item If $\mathcal{A}$ succeeds within $c \log_2(m)$ queries on the ordered search instance of length $m$ with answer $z$, then encode $z$ with a $0$ followed by the responses to the queries made by $\mathcal{A}$ on this input. This can be done in $2$ bits per query, for a total of at most $1 + 2c \log_2(m)$ bits.
        \item Otherwise, encode $z$ with a $1$ followed by its representation in binary. This can be done in $1 + \lceil \log_2(m) \rceil$ bits.
    \end{itemize}
    The probability of the first case occurring on a uniformly random input is at least $1-\delta-\epsilon$, so overall the average number of bits this scheme takes is at most:
    \begin{align}
        & 1 + (1-\delta-\epsilon) 2c \log_2(m) + (\delta + \epsilon) \lceil \log_2 m \rceil \\
        \leq& 1 + (1-\delta-\epsilon) \frac{1}{3} \log_2(m) + (\delta + \epsilon) \log_2 (m) + 1 \\
        =& 2 + \frac{2\delta + 2\epsilon + 1}{3} \log_2(m) \\
        <& \frac{2-2\delta-2\epsilon}{3} \log_2(m) + \frac{2\delta + 2\epsilon + 1}{3} \log_2(m) \explain{Because $m>2^{3/(1-\delta-\epsilon)}$} \\
        =& \log_2(m) \,.
    \end{align}
    But $\log_2(m)$ is the entropy of a uniformly distributed value in $[m]$, so this violates the source coding theorem. Therefore, $\mathcal{A}$ must satisfy \eqref{eq:deterministic-ordered-search-many-queries}, which completes the proof.
\end{proof}

\begin{lemma} \label{lem:ordered-search-lower-bound}
Let $\epsilon, \delta \in (0,1)$ such that $2\delta + 2\epsilon < 1$. Let $U_m$ be the uniform distribution over ordered search instances of length $m$. Let $\mathcal{A}$ be a randomized algorithm for ordered search with the property that 
\begin{align}
    \Pr_{\vec{x} \sim U_m}\left[ \mathcal{A} \mbox{ succeeds on } \vec{x} \right] \geq 1-\delta\,. \notag 
\end{align}
Then there exists $c = c(\delta, \epsilon) > 0$ such that, for $m > 2^{3/(1-2\delta-2\epsilon)}$: 
\begin{align}
    \Pr_{\vec{x} \sim U_m} \left[\mathcal{A} \mbox{ issues at least } c \log(m) \mbox{ queries on } \vec{x} \right] \geq \epsilon\,.
\end{align}
\end{lemma}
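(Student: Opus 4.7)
The plan is to reduce Lemma \ref{lem:ordered-search-lower-bound} to the deterministic version via a standard Yao-style averaging argument. I would view the randomized algorithm $\mathcal{A}$ as a distribution over deterministic algorithms $\{\mathcal{A}_r\}_r$ indexed by its random coins $r$, writing $S(r, \vec{x}) \in \{0, 1\}$ for the success indicator of $\mathcal{A}_r$ on $\vec{x}$ and $Q(r, \vec{x})$ for its query count. The hypothesis of the lemma then becomes $\mathbb{E}_{r, \vec{x} \sim U_m}[1 - S(r, \vec{x})] \leq \delta$.

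Next I would extract a large set of coin settings under which the induced deterministic algorithm is individually accurate. By Markov's inequality applied to the inner failure probability over $\vec{x}$,
\begin{align}
\Pr_r\left[\Pr_{\vec{x} \sim U_m}[\mathcal{A}_r \text{ fails on } \vec{x}] > 2\delta\right] \leq \tfrac{1}{2},
\end{align}
so the set $G = \{r : \Pr_{\vec{x} \sim U_m}[\mathcal{A}_r \text{ succeeds}] \geq 1 - 2\delta\}$ satisfies $\Pr_r[r \in G] \geq 1/2$.

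Then I would invoke Lemma \ref{lem:ordered-search-lower-bound-deterministic} with parameters $\delta' = 2\delta$ and $\epsilon' = 2\epsilon$. The admissibility condition $\delta' + \epsilon' < 1$ is exactly the hypothesis $2\delta + 2\epsilon < 1$, and the threshold $m > 2^{3/(1 - \delta' - \epsilon')} = 2^{3/(1 - 2\delta - 2\epsilon)}$ is precisely what is assumed. The deterministic lemma then produces a constant $c = c(\delta, \epsilon) > 0$ such that for every $r \in G$,
\begin{align}
\Pr_{\vec{x} \sim U_m}[Q(r, \vec{x}) \geq c \log_2(m)] \geq 2\epsilon.
\end{align}
Combining with the law of total probability yields
\begin{align}
\Pr_{r, \vec{x}}[Q(r, \vec{x}) \geq c \log_2(m)] \geq \Pr_r[r \in G] \cdot 2\epsilon \geq \tfrac{1}{2} \cdot 2\epsilon = \epsilon,
\end{align}
which is exactly the claimed statement for $\mathcal{A}$.

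The only real subtlety is choosing how to partition the slack in the hypotheses between the two Markov-type steps: doubling both $\delta$ and $\epsilon$ in the deterministic invocation and halving each in the outer bound uses exactly the gap permitted by $2\delta + 2\epsilon < 1$ and by the threshold on $m$. Beyond this bookkeeping, no genuine obstacle arises, since the deterministic lemma already does the source-coding heavy lifting.
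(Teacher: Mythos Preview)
Your proposal is correct and follows essentially the same approach as the paper: view $\mathcal{A}$ as a distribution over deterministic algorithms, apply Markov's inequality to extract a probability-$\tfrac12$ set of coin settings on which the deterministic algorithm succeeds with probability at least $1-2\delta$, invoke Lemma~\ref{lem:ordered-search-lower-bound-deterministic} with parameters $(2\delta,2\epsilon)$, and combine. The bookkeeping of how the slack $2\delta+2\epsilon<1$ and the threshold on $m$ are consumed matches the paper exactly.
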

\begin{proof}
    We can view $\mathcal{A}$ as a distribution $D_{\mathcal{A}}$ over deterministic algorithms. Because the choice of deterministic algorithm occurs before execution, the draw from $D_{\mathcal{A}}$ is independent of the random input. Then by Markov's inequality:
    \begin{align}
        \Pr_{A \sim D_{\mathcal{A}}} \left[ \Pr_{\vec{x} \sim U_m} \left[ \text{$A$ fails on $\vec{x}$}\right] > 2\delta \right] &\leq \frac{\mathbb{E}_{A \sim D_{\mathcal{A}}} \left[\Pr_{\vec{x} \sim U_m} \left[ \text{$A$ fails on $\vec{x}$}\right]\right]}{2\delta} \\
        &= \frac{\Pr_{\vec{x} \sim U_m} \left[ \text{$\mathcal{A}$ fails on $\vec{x}$}\right]}{2\delta} \\
        &\leq \frac{1}{2}\,. \label{eq:A-big-failure-less-than-one-half}
    \end{align}
    We will use the same $c$ as Lemma \ref{lem:ordered-search-lower-bound-deterministic}. By that lemma, for any deterministic algorithm $A$ with success probability at least $1-2\delta$ on the uniform distribution:
    \begin{align}
        \Pr_{\vec{x} \sim U_m} \left[A \mbox{ issues at least } c \log_2(m) \mbox{ queries on } \vec{x} \right] \geq 2\epsilon\,. \label{eq:deterministic-A-many-queries-more-than-2-epsilon}
    \end{align}
    Combining \eqref{eq:A-big-failure-less-than-one-half} and \eqref{eq:deterministic-A-many-queries-more-than-2-epsilon}:
    \begin{align}
        &\Pr_{\vec{x} \sim U_m} \left[\mathcal{A} \mbox{ issues at least } c \log(m) \mbox{ queries on } \vec{x} \right] \\
        \geq& \Pr_{A \sim D_{\mathcal{A}}} \left[\Pr_{\vec{x} \sim U_m} \left[ \text{$A$ succeeds on $\vec{x}$} \right] \geq 1-2\delta\right] \\
        &\cdot \mathbb{E}_{A \sim D_{\mathcal{A}}}  \left[\Pr_{\vec{x} \sim U_m} \left[A \mbox{ issues at least } c \log(m) \mbox{ queries on } \vec{x} \right] \mid \Pr_{\vec{x} \sim U_m} \left[ \text{$A$ succeeds on $\vec{x}$} \right] \geq 1-2\delta\right] \notag \\
        \geq& \frac{1}{2} \cdot 2\epsilon \\
        =& \epsilon \,.
    \end{align}
    This completes the proof.
\end{proof}

\section{An $O(k \log n \log(nk))$ upper bound for multi-dimensional herring-bone functions} \label{app:upper_bound_distribution_U}

In this section we include the proofs of statements in  \cref{sec:upper_bound_distribution_U}.

\theoremherringboneupperbound*

\begin{proof}
    By Lemma \ref{lem:general-herringbone-spine-find-Hamming}, the spine vertex with any particular Hamming weight can be found in $O(k \log n)$ queries. This subroutine can be used to run binary search to locate the fixed point along the spine.

    Specifically, suppose the (unknown) instance is $h^{\vec{s}, j}$ for some $\vec{s} \in \Psi$ and $j \in \{0, \ldots, (n-1)k\}$. For an arbitrary $x \in \{0, \ldots, (n-1)k\}$, querying the spine vertex $s^x$ produce three different options depending on how $x$ compares to $j$:
    \begin{itemize}
        \item If $x=j$, then $h^{\vec{s}, j}(s^x) = s^x$. This is the fixed point.
        \item If $x < j$, then $h^{\vec{s}, j}(s^x) = s^{x+1}$. This is identifiable because $h^{\vec{s}, j}(s^x) > s^x$ only in this case.
        \item If $x > j$, then $h^{\vec{s}, j}(s^x) = s^{x-1}$. This is identifiable because $h^{\vec{s}, j}(s^x) < s^x$ only in this case.
    \end{itemize}
    This is enough feedback to run binary search to find $j$. Binary search takes $O(\log (nk))$ iterations to identify $j \in \{0, \ldots, (n-1)k\}$, so overall this algorithm uses $O(k \log(n) \log(nk))$ queries.
\end{proof}

\begin{lemma}
    \label{lem:general-herringbone-spine-find-Hamming}
    Given a value $0 \le x \le (n-1)k$, there is an $O(k\log(n))$-query algorithm to find the spine vertex $s^x = (s^x_1, \ldots, s^x_k)$ that is Hamming distance $x$ from the origin in a multi-dimensional herringbone function.
\end{lemma}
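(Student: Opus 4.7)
The plan is to reduce finding $s^x$ to $k$ independent binary searches, one per coordinate direction. For each dimension $i \in [k]$, a binary search on $c \in \{0, \ldots, n-1\}$ determines $s^x_i$ in $O(\log n)$ queries, yielding $O(k \log n)$ total. By Lemma \ref{lem:weight_of_spine_vertex}, the coordinates of $s^x$ sum to $x$ and each lies in $\{0, \ldots, n-1\}$, so these binary searches together reconstruct $s^x$ unambiguously.

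The core step is to show that a single query suffices to answer ``is $s^x_i \leq c$?'' for given $i$ and $c$. I would exploit the off-spine formula in Definition \ref{def:general-herringbone}, which gives $h(v) = v + (s^{\mu(v)+1} - s^{\mu(v)}) - (s^{M(v)} - s^{M(v)-1})$ whenever $v$ is not on the spine. For a query vertex $v = v(i,c,x)$ with $v_i = c$, $wt(v) = x$, and the remaining weight $x - c$ distributed canonically among coordinates $j \neq i$, we have $\mu(v) \leq x - 1 < x \leq M(v)$ whenever $v \neq s^x$; this follows because $s^x$ is the unique spine vertex of weight $x$ by Lemma \ref{lem:weight_of_spine_vertex}, so either $v = s^x$ (in which case the query immediately reveals that $v$ is the spine vertex at step $x$, whence $s^x_i = c$) or $v$ is strictly ``adjacent'' to the spine near step $x$. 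In the latter case the returned directions $(a,b)$ lie on opposite sides of step $x$, and the identity $a = i$ or $b = i$ (or neither), together with the known auxiliary coordinates of $v$, lets us decide whether the spine must still accumulate coordinate-$i$ progress by step $x$, which is exactly the comparison $s^x_i \leq c$ vs.\ $s^x_i > c$.

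The main obstacle is proving this last claim rigorously: the pair $(a, b)$ encodes directions rather than indices, so the auxiliary coordinates of $v$ must be chosen carefully so that knowledge of $a$ (and, in corner cases, $b$) uniquely determines which side of $c$ the spine sits at step $x$. My plan is to use a balanced choice such as $v_j = \lfloor (x-c)/(k-1) \rfloor$ or $\lceil (x-c)/(k-1) \rceil$ for $j \neq i$, so that $v$ lies within $O(1)$ lattice steps of $s^x$ in each coordinate and the direction revealed by a single query pinpoints the local crossing. Boundary cases (when $c \in \{0, n-1\}$, when $x - c$ is not evenly divisible among the other $k-1$ coordinates, or when $v$ happens to coincide with $s^x$) each need only a constant number of additional queries and do not affect the $O(k \log n)$ asymptotic bound.
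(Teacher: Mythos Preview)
Your reduction to $k$ independent binary searches has a genuine gap at the point you yourself flag as the ``main obstacle.'' A query at a non-spine vertex $v$ with $wt(v)=x$ returns the pair $(p,q)$ where $p$ is the coordinate in which $s^{\mu(v)+1}$ first exceeds $v$ and $q$ is the coordinate in which $s^{M(v)-1}$ first falls below $v$. These coordinates are determined by the \emph{spine}, not by you: $p$ is whichever coordinate happens to be the binding constraint in $s^{\mu(v)}\le v$. There is no choice of the auxiliary coordinates $v_j$ ($j\ne i$) that forces $p=i$ or $q=i$; with a balanced choice $v_j\approx (x-c)/(k-1)$ the binding coordinate will simply be whichever direction the spine is most ``ahead'' in relative to $v$, and a skewed spine (e.g.\ one that exhausts coordinate $1$ first, then coordinate $2$, etc.) makes this coordinate arbitrary. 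When $p\ne i$ and $q\ne i$ you learn $s^x_p>v_p$ and $s^x_q<v_q$, which says nothing about $s^x_i$ versus $c$ (the weight constraint $wt(s^x)=wt(v)$ does not pin down a single coordinate). So a single query does not implement the comparison oracle your $i$-th binary search needs, and the argument breaks.

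The paper's proof takes a different, adaptive route that sidesteps this issue. It maintains a single probe point $b^m$ of weight $x$ together with coordinate-wise bounds $a^m\le s^x\le c^m$. Each query at $b^m$ reveals some pair $(p,q)$ and certifies $s^x_p\ge b^m_p$ and $s^x_q\le b^m_q$, so the algorithm tightens whichever two coordinates the spine happens to expose, then shifts $b^m$ in those coordinates (by equal amounts, preserving weight $x$). A potential $\Phi(m)=\sum_i \phi(m,i)$, with $\phi$ essentially the log of the remaining uncertainty in coordinate $i$, drops by at least $1$ per query and ranges over an interval of length $O(k\log n)$, giving the bound. In short: the $k$ binary searches are not independent but interleaved, with the spine dictating which coordinate advances at each step; the potential argument is what replaces the per-coordinate query count you were hoping for. (The proof sketch's phrase ``$k$ independent binary searches'' is a bit loose in this respect.)
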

\begin{proof}
    Let $a^1_i = 0$, let $b^1_i = x/k$, and let $c^1_i = n-1$ for all $i \in [k]$.
    We then perform the following steps repeatedly until a spine vertex is returned.
    Let the iteration number be $m$, starting with $1$.
    We use the notation $a^m = (a^m_1, \ldots, a^m_k)$ for all vertices $a^m$.
    \begin{enumerate}
        \item Query $f(b^m)$.
        \item If $\|f(b^m)-b^m\| \le 1$, return $b^m$ since it must be a spine vertex.
        \item Otherwise, there are dimensions $p,q \in [k]$ such that $(f(b^m))_p - b^m_p = b^m_q - (f(b^m))_q = 1$.
        \item Set $a^{m+1} = a^m$ and $c^{m+1} = c^m$, except for $a^{m+1}_p$, which we set equal to $b^m_p$ and $c^{m+1}_q$ which we set equal to $b^m_q$.
        \item Let $d^m = \min(c^m_p-b^m_p, b^m_q-a^m_q)$. Set $b^{m+1} = b^m$, except for $b^{m+1}_p$ which we set equal to $b^m_p + d^m/2$ and $b^{m+1}_q$ which we set equal to $b^m_q - d^m/2$.
    \end{enumerate}

    We claim the above algorithm maintains the following invariants for all $m$:
    \begin{enumerate}
        \item $a^m \le b^m \le c^m$.
        \item $a^m \le s^x \le c^m$.
        \item $b^m$ is Hamming distance $x$ from the origin.
    \end{enumerate}
    The invariants are initially true by the initialization of $a^1, b^1, c^1$.
    Suppose inductively that the invariants hold for $a^m, b^m, c^m$.
    We will show they hold for $a^{m+1}, b^{m+1}, c^{m+1}$ if the algorithm did not return in or before iteration $m$.
    
    Assuming the algorithm did not return in iteration $m$, we have dimensions $p,q$ such that $(f(b^m))_p - b^m_p = b^m_q - (f(b^m))_q = 1$.
    Then $b^m_p = s^{\mu(b^m)}_p$.
    Furthermore, $s^x \ge s^{\mu(b^m)}$ since all spine vertices are fully ordered and $s^{\mu(b^m)}$ has Hamming distance from the origin strictly less than $x$.
    Therefore $s^x_p \ge b^m_p$, so $a^{m+1}_p \le s^x_p$ as required.
    By symmetric logic, $s^x_q \le c^{m+1}_q$.
    All other components of $a^{m+1} \le s^x \le c^{m+1}$ hold since $a^m \le s^x \le c^m$.

    We have $a^{m+1}_p \le b^{m+1}_p$ and $b^{m+1}_q \le c^{m+1}_q$ since $d^m \ge 0$.
    Also $b^{m+1}_p \le c^{m+1}_p$ since $d^m/2 \le c^m_p - b^m_p$ and $a^{m+1}_q \le b^{m+1}_q$ since $d^m/2 \le b^m_q - a^m_q$.
    All other components of $a^{m+1} \le b^{m+1} \le c^{m+1}$ hold since $a^m \le b^m \le c^m$.

    Finally, $b^{m+1}$ is Hamming distance $x$ from the origin because $b^m$ was Hamming distance $x$ from the origin, and $b^{m+1}$ differs only in coordinates $p$ and $q$, which are larger and smaller respectively by the same amount: $d^m/2$.

    To argue that only $O(k \log(n))$ iterations are needed, we use a potential function $\Phi$, defined as
    \begin{align}
        \Phi(m) = \sum_{i \in [k]} \phi(m,i), 
    \end{align}
    where
    \begin{align}
        \phi(m,i) = \begin{cases}
            -1 & \text{if } a^m_i = c^m_i\\
            \log(b^m_i - a^m_i) & \text{if } b^m_i - a^m_i > c^m_i - b^m_i\\
            \log(c^m_i - b^m_i) & \text{otherwise}
        \end{cases}\,.
    \end{align}
    The function $\phi(m,i)$ is well defined so long as $a^m_i \le b^m_i \le c^m_i$, which is the case due to our invariants.
    We have $\Phi(1) \le k \log(n)$.
    For all $m$ we have $\Phi(m) \ge -k$.
    We next show that $\Phi(m+1)\le \Phi(m)-1$ for all $m$ such that the algorithm did not return in or before iteration $m$.
    
    First, for all $i$, $\phi(m+1,i) \le \phi(m,i)$ since since $a^{m+1}_i \ge a^m_i$ and $c^{m+1}_i \le c^m_i$ and $b^{m+1}_i$ is weakly closer to each endpoint of $[a^{m+1}_i, c^{m+1}_i]$ than $b^m_i$ is to each endpoint of $[a^m_i, c^m_i]$.

    Second, without loss of generality let $c^m_p - b^m_p \le b^m_q - a^m_q$; the alternate case is symmetric.
    Then we claim $\phi(m+1,p) \le \phi(m,p) - 1$.

    By definition of $p$, we have $a^m_p < c^m_p$, so $\phi(m,p) \ge 0$.
    Therefore if $a^{m+1}_p = c^{m+1}_p$, then $\phi(m+1,p) = -1$ so $\phi(m+1,p) \le \phi(m,p) - 1$.
    
    Otherwise, we have $d^m = c^m_p - b^m_p$, so $c^{m+1}_p - a^{m+1}_p = d^m$ and $b^{m+1}_p = a^{m+1}_p + d^m/2$.
    Thus $\phi(m+1,p) = \log(d^m/2) = \log(d^m) - 1 \le \phi(m,p) - 1$.

    Now that we have shown $\phi(m+1,p) \le \phi(m,p) - 1$, we may conclude that $\Phi(m+1) \le \Phi(m) - 1$. Therefore the algorithm only runs for at most $k \log(n) + k \in \cO(k \log(n))$ iterations, and issues the same number of queries.
\end{proof}

\end{document}